\documentclass[a4paper,11pt]{article}
\usepackage{amsmath, amsthm, amssymb}
\usepackage{tabularx}
\usepackage{algorithm}
\usepackage{algpseudocode}
\usepackage{tikz}
\usetikzlibrary {arrows.meta,automata,positioning}
\usepackage{hyperref}
\hypersetup{breaklinks=True}
\usepackage{xurl}
\usepackage{geometry}

\makeatletter
\newcommand{\multiline}[1]{%
  \begin{tabularx}{\dimexpr\linewidth-\ALG@thistlm}[t]{@{}X@{}}
    #1
  \end{tabularx}
}
\makeatother

\newtheorem{theorem}{Theorem}[section]

\newtheorem{lemma}[theorem]{Lemma}

\theoremstyle{definition}
\newtheorem{definition}[theorem]{Definition}
\newtheorem{example}[theorem]{Example}
\newtheorem{remark}[theorem]{Remark}

\title{Pseudo-MDP Convolutional Codes for Burst Erasure Correction}

\author{Zita Abreu, Julia Lieb, Raquel Pinto}

\usepackage{xcolor}

\usepackage{comment}

\begin{document}

\maketitle

\begin{abstract}
Convolutional codes are a class of error-correcting codes that performs very well over erasure channels with low delay requirements. In particular,  Maximum Distance Profile (MDP) convolutional codes, which are defined to have optimal column distances, are able to correct a maximal number of erasures in decoding windows of fixed sizes. However, the required field size in the known constructions for MDP convolutional codes increases rapidly with the code parameters. On the other hand, if the code parameters are small, larger bursts of erasures cannot be corrected.
In this paper, we present a new class of convolutional codes, which we call Pseudo-MDP convolutional codes. By definition these codes can correct large bursts of erasures within a prescribed time-delay and still keep part of the advantageous properties of MDP convolutional codes, in the sense that we require some but not all column distances to be optimal. This release in the condition on the column distances allows us to construct Pseudo-MDP convolutional codes over fields of smaller size than those required for MDP convolutional codes with the same code parameters.

\end{abstract}

\section{Introduction}
Maximum Distance Profile (MDP) Convolutional Codes are a class of codes that present very good performance in error correction when data is transmitted over an erasure channel. Unlike traditional communication channels where data may be corrupted or altered, an erasure channel presents a unique scenario where data is either successfully received intact or completely lost. Erasure channels are directly applicable in real-world systems. They are used to model packet loss in data networks, such as the Internet, where data packets may be dropped and marked as missing. They are also relevant in wireless communication, satellite links, and distributed storage systems where data may be lost due to interference, signal fading, or hardware failures. In many practical communication systems, erasures do not occur randomly or independently; instead, they often appear in clusters, known as bursts of erasures. This phenomenon is particularly common in real-world erasure channels, and it is driven by several underlying physical and network-related factors. Convolutional codes are very well suited for correction over these channels, particularly when the erasures occur in bursts.

MDP convolutional codes are defined by their optimal column distances, which means that  within any window of a certain length, the code maximizes the number of erasures that can be corrected.
Despite their attractive error-correction capabilities, there are relatively few constructions of MDP convolutional codes, and the general constructions that exist tend to require large finite fields (see \cite{GRS:sMDS}, \cite{ANP:MDP}, \cite{NS:MDP}, \cite{L:cMDP}, \cite{rscc}). Recently, there has been growing interest in constructing MDP convolutional codes over smaller fields to make them more practical for implementation in real systems (see  \cite{AL:jMDP}, \cite{MUNOZCASTANEDA}, \cite{unitmemorymdp}, \cite{chen}). In \cite{cheng2026new,chen,itw,unitmemorymdp} code parameters are considered that imply that the column distances can only be optimal until the first column distance, and in \cite{AL:jMDP}, so-called $j$-MDP codes are considered, which are by definition only required to be optimal up to the $j$-th column distance. This is due to the fact that for low-delay decoding especially the small column distances are relevant and if one requires only the first $j+1$ column distances to be optimal, constructions over smaller finite fields are possible. Another drawback of MDP convolutional codes, besides that large finite fields are needed for their construction, is that they do not perform ideally if large bursts of erasures occur. 

In this paper, we propose a new class of $(n,k, \delta)$ convolutional codes, with $k \mid \delta$, which we call Pseudo-MDP convolutional codes and which are defined as codes with optimal first $\frac{\delta}{k}$ column distances that additionally are able to correct bursts of erasures of a prescribed size within a given time delay.
We present constructions of codes with this property
by extending the encoder of an MDP convolutional code $\cal C$ of smaller degree. While the resulting Pseudo-MDP convolutional code may not maintain the MDP property, we keep the field size of the original MDP code even if the newly constructed Pseudo-MDP convolutional code is of larger degree. Since Pseudo-MDP convolutional codes additionally exhibit improved performance in terms of burst erasure correction, these codes offer an alternative to traditional MDP convolutional codes, also because they allow the use of smaller finite fields (when considering the same code parameters).


The paper is organized as follows. In Section 2, we provide preliminary results 
on convolutional codes, with a focus on Maximum Distance Profile (MDP) codes. 
In Section 3, we define and construct $(n,k,\delta)$ Pseudo-MDP convolutional codes where $k | \delta$. 
In Section 4, we generalize the definition of Pseudo-MDP convolutional codes 
by relaxing column distance conditions to support smaller fields and larger 
code rates. Finally, Section 5 provides a brief summary of the results of the paper.

\section{MDP Convolutional Codes}

Throughout this paper, we consider convolutional codes over a finite field $\mathbb{F}_{q}$. Let $\mathbb{F}_{q}[z]$ denote the ring of polynomials in the indeterminate $z$ with coefficients in $\mathbb{F}_{q}$. In this section, we present preliminary results on convolutional codes, in particular, on MDP convolutional codes.

\vspace{0.2cm}

\begin{definition}
 An $(n,k)$ \textbf{convolutional code} is an $\mathbb{F}_{q}[z]$-submodule of $\mathbb{F}_{q}[z]^n$ of rank $k$. A matrix $G(z) \in \mathbb{F}_{q}[z]^{k \times n}$ whose rows form a basis of $\cal C$ is called an \textbf{encoder} of $\cal C$ and we have
\[
\begin{aligned}
\mathcal{C} &= \operatorname{Im}_{\mathbb{F}_q[z]} G(z) \\
           &= \left\{ v(z) \in \mathbb{F}_q[z]^n \, : \, v(z) = u(z) G(z), \, \text{with} \, u(z) \in \mathbb{F}_q[z]^k \right\}.
\end{aligned}
\]
\end{definition}
Moreover $v(z) = u(z)G(z)$ is called the codeword corresponding to the information sequence $u(z)$.

Clearly an $(n,k)$ convolutional code admits multiple encoders and if $G(z)$ and $G'(z)$ are two encoders of $\cal C$ then $$G'(z)=U(z)G(z)$$ for some unimodular matrix $U(z) \in \mathbb{F}_{q}[z]^{k \times k}$ (i.e., a polynomial matrix with inverse in $\mathbb{F}_{q}[z]^{k \times k}$, or equivalently, with determinant in $\mathbb{F}_{q} \backslash \{0\}$). This implies that the full size minors of two encoders of $\cal C$ differ only by a constant factor.

The maximum degree of the full size minors of an encoder of $\cal C$ is called the \textbf{degree} of the code and we denote it by $\delta$. An $(n,k)$ convolutional code with degree $\delta$ is said to be an $(n,k,\delta)$ convolutional code. For $i\in\{1,\hdots,k\}$, the $i$-th \textbf{row degree} of an encoder $G(z)$ is the maximal degree of any entry of row $i$ of $G(z)$. The sum of the row degrees of an encoder is always lower bounded by the degree of the code and if it is equal, $G(z)$ is called \textbf{minimal}. The maximum of the row degrees of $G(z)$ is called the \textbf{memory} $\nu$ of $G(z)$.

\begin{definition}
Let $\mathcal{C}$ be an $(n,k,\delta)$ convolutional code.
If there exists a matrix $H(z)\in\mathbb F_q[z]^{(n-k)\times n}$ of full (row) rank such that $$\mathcal{C}=\{v(z)\in\mathbb F_q[z]^n\ |\ H(z)v(z)^\top=0\},$$ this matrix $H(z)$ is called \textbf{parity-check} matrix for $\mathcal{C}$. Convolutional codes for which there exists a parity-check matrix are called \textbf{non-catastrophic}. Convolutional codes without parity-check matrix are called \textbf{catastrophic}.
\end{definition}

The following theorem gives criteria how to see from an encoder of a convolutional code whether the code is non-catastrophic.


\begin{theorem}[\cite{York97},\cite{kailath1980}] \label{Thprime}
Let $G(z) \in \mathbb{F}_q[z]^{k\times n}$, with $k \leq n$. The following statements are equivalent:
\begin{enumerate}
  \item The convolutional code with encoder $G(z)$ is non-catastrophic;
  \item the ideal generated by all the $k\times k$ minors of $G(z)$
   is $\mathbb{F}_q[z]$;
  \item for all $u(z) \in \mathbb F_q(z)^k$, $u(z)G(z) \in \mathbb{F}_q[z]^n$ implies that $u(z) \in \mathbb{F}_q[z]^k$, where 
  \[
  \mathbb F_q(z):=\left\{\frac{a(z)}{b(z)}\;:\; a(z),b(z)\in\mathbb{F}_q[z],\;b(z)\neq 0\right\}
  \]
  is the quotient field of $\mathbb F_q[z]$;
  \item the matrix $G(\lambda)$ has full rank for all $\lambda \in \overline{\mathbb F}_q$, where $\overline{\mathbb F}_q$ denotes the algebraic closure of $\mathbb F_q$.
\end{enumerate}
\label{Thm:lprime}
\end{theorem}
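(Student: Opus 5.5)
We plan a cycle of implications that uses the Smith normal form of $G(z)$ over the principal ideal domain $\mathbb{F}_q[z]$. Implicitly $G(z)$ has full row rank $k$, since it encodes a rank-$k$ code. We write $G(z)=U(z)\,[D(z)\;\;0]\,V(z)$ with $U(z)\in\mathbb F_q[z]^{k\times k}$ and $V(z)\in\mathbb F_q[z]^{n\times n}$ unimodular, and $D(z)=\mathrm{diag}(d_1(z),\dots,d_k(z))$ with $d_1\mid d_2\mid\cdots\mid d_k$ all nonzero. The basic fact we use is that the ideal generated by the $k\times k$ minors is invariant under multiplication by unimodular matrices on either side; this follows from the Cauchy--Binet formula applied to both the transformation and its inverse. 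Hence this ideal equals $(d_1\cdots d_k)$. So statement 2 holds if and only if every $d_i$ is a nonzero constant, which we call condition (*). The proof then reduces to showing that each of statements 1, 3 and 4 is equivalent to (*).

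For $4\Leftrightarrow 2$: since $U(\lambda)$ and $V(\lambda)$ are invertible for every $\lambda\in\overline{\mathbb F}_q$ (their determinants are nonzero constants), $G(\lambda)$ has full rank if and only if no $d_i(\lambda)=0$. This holds for all $\lambda$ if and only if every $d_i$ is constant, because $\overline{\mathbb F}_q$ is algebraically closed. An equivalent route avoids the Smith form: the minors have no common root in $\overline{\mathbb F}_q$ if and only if their gcd is $1$, that is, if and only if they generate $\mathbb F_q[z]$.

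For $3\Leftrightarrow$ (*): assume (*) and $u(z)G(z)\in\mathbb F_q[z]^n$. Multiplying by $V^{-1}$ shows that $u\,U\,[D\;0]$ is polynomial. With $D$ constant and invertible, $uU$ is polynomial, and therefore so is $u=(uU)U^{-1}$. Conversely, if some $d_k$ is nonconstant, take $u=e_kU^{-1}/d_k$. This vector is not polynomial, because $e_kU^{-1}$ is a row of a unimodular matrix and so its entries have gcd $1$. Yet $uG=e_k[I\;0]V$ is polynomial, so statement 3 fails.

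For $1\Leftrightarrow$ (*): if (*) holds, then $G$ has a polynomial right inverse $V^{-1}[D^{-1};0]U^{-1}$. We complete $G$ to a unimodular matrix by using the last $n-k$ rows of $V$. Take $H$ to be $(n-k)$ rows of the inverse of this completed matrix, namely the last $n-k$ columns of $V^{-1}$, transposed. Then $Hv^\top=0$ exactly for $v\in\mathrm{Im}\,G$, and $H$ has full rank, so the code is non-catastrophic. Conversely, suppose a full-rank $H$ exists with $\mathcal C=\ker H$. Then $\mathcal C=\ker_{\mathbb F_q(z)}H\cap\mathbb F_q[z]^n$ is a saturated submodule, meaning that $p(z)v\in\mathcal C$ with $p\neq0$ implies $v\in\mathcal C$. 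If $d_k$ were nonconstant, then $v=e_k[I\;0]V$ would satisfy $d_kv\in\mathcal C$. Writing $v=uG$ with $u$ polynomial gives $u=e_kU^{-1}/d_k$ from the full-rank equation, contradicting polynomiality exactly as in the previous paragraph.

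The main obstacle is the converse direction $1\Rightarrow$(*). It needs the observation that a code defined as a kernel is saturated, together with the rank argument showing that the representation $v=uG$ is unique over $\mathbb F_q(z)$. The remaining steps are routine manipulations with the Smith form.
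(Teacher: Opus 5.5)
The paper gives no proof of this statement. It is imported from York's thesis and Kailath's book and used as a black box, for instance $1\Rightarrow 3$ in the proof of Theorem~\ref{Thm:Free-distance-limit}. So there is no in-paper argument to compare yours with. Judged on its own, your proposal is correct and complete.

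Routing all four conditions through the single Smith-form condition $(*)$ is the standard way such characterizations of left prime polynomial matrices are proved. What you gain over the paper is self-containment. Your saturation argument for $1\Rightarrow(*)$ also fits the paper's convention that codes are submodules of $\mathbb F_q[z]^n$ with polynomial messages, which is the point the paper stresses before Theorem~\ref{Thm:Free-distance-limit}. That direction uses only that $\mathcal C$ is the polynomial kernel of some matrix, not that $H$ has full rank. Full rank matters only to certify that the $H$ you construct in $(*)\Rightarrow 1$ is an admissible parity-check matrix.

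A few small points to tidy up:
\begin{itemize}
\item The unimodularity of your completed matrix deserves its one-line justification. Write $V_1$ and $V_2$ for the first $k$ and last $n-k$ rows of $V$. Then $G=UDV_1$, so the stacked matrix with blocks $G$ and $V_2$ equals $\mathrm{diag}(UD,I_{n-k})\,V$, which is unimodular when $D$ is a constant invertible matrix.
\item The parity-check matrix is the transpose of the last $n-k$ \emph{columns} of the inverse of this matrix, not ``rows''. These coincide with the last $n-k$ columns of $V^{-1}$, as your ``namely'' clause correctly says.
\item The negation of $(*)$ is that some $d_i$ is nonconstant, which forces $d_k$ to be nonconstant by the divisibility chain. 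Say this instead of ``some $d_k$''.
\item Your argument is a star of equivalences centered at $(*)$ rather than the announced cycle. That is perfectly fine, but describe it as such.
\end{itemize}
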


Since two encoders of a convolutional code $\cal C$ differ by multiplication on the left by a unimodular matrix, if an encoder of $\cal C$ fulfills any of the conditions of Theorem \ref{Thprime}, then all the encoders of $\cal C$ also fulfill the same conditions.

The correction capability of a code is measured in terms of its distance. An important notion of distance for convolutional codes is the column distance. This notion is defined for convolutional codes that admit a delay-free encoder, i.e., an encoder $G(z)$ with $G(0)$ full row rank. Note that since two encoders of a convolutional code differ by left multiplication of a unimodular matrix, if a convolutional code admits a delay-free encoder, then all its encoders are delay-free and the code is said to be \textbf{delay-free}.

Given a vector $$w(z)=\displaystyle \sum_{i \in \mathbb N_0} w_i z^i \in \mathbb{F}_{q}[z]^n$$ and $j_1,j_2 \in \mathbb N_0$ with $j_1 <j_2$, the vector $${w(z)}_{[j_1,j_2]}=w_{j_1}z^{j_1} + w_{j_1+1}z^{j_1+1} + \cdots + w_{j_2} z^{j_2}$$ is defined as the \textbf{truncation} of $w(z)$ to the interval $[j_1,j_2]$.

\vspace{0.2cm}

\begin{definition}
Let $\cal C$ be an $(n,k,\delta)$ delay-free convolutional code. The \textbf{$j-th$ column distance} of $\cal C$ is defined as
$$
d^c_j({\cal C})  =  \min\{\operatorname{wt}(v(z)_{[0,j]} \, : \; v(z) \in {\cal C} \mbox{ with } v_0 \neq 0\},$$ where $$\operatorname{wt}(v(z)) = \sum_{t \in \mathbb{N}_0} \operatorname{wt} (v_t)$$ is the \textbf{Hamming weight} of $$v(z) = \sum_{t \in \mathbb{N}_0}v_{t}z^{t} \in \mathbb{F}_{q}^n[z]$$ and the weight $\operatorname{wt}(v)$ of $v \in \mathbb{F}_{q}^{n}$ is the number of nonzero components of $v$.
Moreover,
$$
    d_{free}({\cal C})=\min\{{\rm wt}(v(z)) \, : \, v(z) \in {\cal C}\backslash \{0\}\},
    $$
is called \textbf{free distance} of $\cal C$.
\end{definition}

The following inequalities can be directly deduced from the previous definition (see \cite{GRS:sMDS}):
\begin{align}\label{ineq}
0\leq d_0^c(\mathcal{C})\leq d_1^c(\mathcal{C})\leq\cdots\leq d_{free}(\mathcal{C}).    
\end{align}
This implies that $\lim_{j\rightarrow\infty} d_j^c (\mathcal{C})$ always exists.

In \cite{Johannesson2015} it is shown that if one allows messages to be power series, i.e. $u(z)\in\mathbb F_q[[z]]^k$, then $d_{free}(\mathcal{C})= \lim_{j\rightarrow\infty} d_j^c (\mathcal{C})$. In our setting messages have to be polynomials, i.e. $u(z)\in\mathbb F_q[z]^k$. It turns out that in this setting, the equality $d_{free}(\mathcal{C})= \lim_{j\rightarrow\infty} d_j^c (\mathcal{C})$ is only true in general for non-catastrophic convolutional codes. Since it is necessary to modify the proof in \cite{Johannesson2015} when defining convolutional codes as submodules of $\mathbb F_q[z]^n$, we include the proof for the following theorem.

\begin{theorem}\label{Thm:Free-distance-limit}
Let $\mathcal{C}$ be a delay-free convolutional code. If $\mathcal{C}$ is non-catastrophic, then
$\displaystyle d_{free}(\mathcal{C})= \lim_{j\rightarrow\infty} d_j^c (\mathcal{C})$.
\end{theorem}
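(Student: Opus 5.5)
We need to show two inequalities. First, $\lim_{j\to\infty} d_j^c(\mathcal C)\le d_{free}(\mathcal C)$. Take a nonzero codeword $v(z)\in\mathcal C$ of weight $d_{free}(\mathcal C)$. Since $v$ is a polynomial, we can write $v(z)=z^t w(z)$ with $w_0\neq 0$. The key point is that $w(z)\in\mathcal C$, and this is where non-catastrophicity enters. Write $v=uG$ with $G$ delay-free. Then $w = z^{-t}u\,G$ with $z^{-t}u\in\mathbb F_q(z)^k$ and $w$ polynomial, so Theorem \ref{Thprime}(3) gives $z^{-t}u\in\mathbb F_q[z]^k$. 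Hence $w\in\mathcal C$, $w_0\ne0$, and $\operatorname{wt}(w)=\operatorname{wt}(v)$. For every $j$ this gives $d_j^c(\mathcal C)\le \operatorname{wt}(w_{[0,j]})\le \operatorname{wt}(w)=d_{free}(\mathcal C)$. Together with the monotonicity (\ref{ineq}), the limit exists and is at most $d_{free}$.

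For the reverse inequality, let $d=\lim_j d_j^c(\mathcal C)$. Since the sequence is a nondecreasing sequence of integers, $d_j^c=d$ for all $j\ge j_0$. I want a polynomial codeword with $v_0\neq0$ and weight at most $d$. For each $j$, pick $u^{(j)}(z)$ with $v^{(j)}=u^{(j)}G$, $v^{(j)}_0\neq0$, and $\operatorname{wt}(v^{(j)}_{[0,j]})=d_j^c\le d$. Only the truncation $u^{(j)}_{[0,j]}$ affects $v^{(j)}_{[0,j]}$, so we may take $u^{(j)}$ polynomial of degree $\le j$. A König-type compactness argument over the finite field (finitely many choices for each prefix) then yields an infinite sequence $u=\sum u_i z^i\in\mathbb F_q[[z]]^k$ with $u_0G_0\neq 0$ such that every truncation of $uG$ has weight $\le d$. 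Consequently $v:=uG\in\mathbb F_q[[z]]^n$ has total weight $\le d$, so $v$ is in fact a polynomial.

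The main obstacle is that $u$ is only a power series, while codewords must be polynomial. Here non-catastrophicity is used again, via an extension of Theorem \ref{Thprime}(3) to power series: $G$ has a polynomial right inverse $G^{R}(z)$, because the ideal of maximal minors is $\mathbb F_q[z]$ (condition 2); equivalently, $G$ can be completed to a unimodular matrix. Then $u=vG^{R}$ is polynomial, so $v\in\mathcal C$ is nonzero, and $d_{free}\le \operatorname{wt}(v)\le d$.

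Combining both inequalities gives equality. For the right inverse, I would construct it explicitly by the Bézout/Smith form argument: since $G$ is left prime, $G=[I_k\;0]\,V(z)$ for some unimodular $V$, and $V^{-1}$ restricted to its first $k$ columns is the required $G^R$.
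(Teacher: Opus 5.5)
Your proof is correct, but it takes a different route from the paper's.

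\textbf{The main inequality $d_{free}(\mathcal C)\le\lim_j d_j^c(\mathcal C)$.} The paper fixes a minimal encoder of memory $\nu$ and takes a codeword with $v_{s+1}=\cdots=v_j=0$ for large $j$. Since the states $(u_{j-\nu},\dots,u_{j-1})$ take only finitely many values, the input can be continued eventually periodically. This gives a \emph{rational} $u(z)$ directly, so criterion (3) of Theorem~\ref{Thprime} applies exactly as stated. (As written, the paper's step needs a pigeonhole argument along one sufficiently long codeword to be fully precise.)

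You instead use a K\"onig-type compactness argument. It needs neither minimality nor the memory, and it is cleaner on the combinatorial side. The cost is that it only produces a power series $u(z)$. You therefore need the extra algebraic fact that a non-catastrophic encoder has a polynomial right inverse, which you justify correctly via the Smith form. There is a shorter way to finish: choose $k$ columns $J$ with $\det G_J(z)\neq 0$. Then $u=v_J\,G_J(z)^{-1}\in\mathbb F_q(z)^k$, and criterion (3) applies directly.

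\textbf{The reverse inequality $d_j^c\le d_{free}$.} The paper simply cites \eqref{ineq}, whereas you prove it. Non-catastrophicity is not actually needed there, so it is not ``where non-catastrophicity enters.'' Delay-freeness suffices: if $u=z^r\tilde u$ with $\tilde u_0\neq 0$, then $(\tilde uG)_0=\tilde u_0G_0\neq 0$ because $G_0$ has full row rank. Hence $t=r$, and $z^{-t}u=\tilde u$ is automatically polynomial. This is consistent with the paper's catastrophic example, where $d_j^c=2\le 4=d_{free}$.
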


\begin{proof}
From \eqref{ineq} and the fact that $d_j^c(\mathcal{C})\in\mathbb N$ it follows that there exists $s\in\mathbb N$ such that $$d_s^c(\mathcal{C})=d_{s+1}^c(\mathcal{C})=\cdots=\lim_{j\rightarrow\infty} d_j^c(\mathcal{C}).$$
Let $\nu$ be the memory of $\mathcal{C}$ and $G(z)$ a minimal encoder of $\mathcal{C}$. Take $j\geq \max(s+1,\nu)$ and $v(z)\in\mathcal{C}$ with $v_0\neq 0$ and $wt(v_{[0,j]})=d_j^c(\mathcal{C})=d_s^c(\mathcal{C})$. One obtains $$d_s^c(\mathcal{C})=wt(v_{[0,j]})=wt(v_{[0,s]})+wt(v_{[s+1,j]})\geq d_s^c(\mathcal{C})+wt(v_{[s+1,j]})$$ and it follows $wt(v_{[s+1,j]})=0$ and $wt(v_{[0,s]})=d_s^c(\mathcal{C})$.\\
This implies that for given $u_{j-\nu},\hdots,u_{j-1}\in\mathbb F_q^k$, one finds $u_j\in\mathbb F_q^k$, such that $$\sum_{i=0}^{\nu}u_{j-i}G_i=v_j=0.$$ Since this is true for any $j\geq \max(s+1,\nu)$ and there are only finitely many possibilities for $u_{j-\nu},\hdots,u_{j-1}\in\mathbb F_q^k$, the sequence $(u_i)_{i\in\mathbb N_0}$ obtained in this way will be eventually periodic. Therefore, the corresponding power series $u(z)=\sum_{i\in\mathbb N_0}u_iz^i$ is rational, i.e. $u(z)\in\mathbb F_q(z)^k$. Moreover $v(z)=u(z)G(z)$ fulfills $wt(v(z))=wt(v_{[0,s]})=d_s^c(\mathcal{C})$. As $\mathcal{C}$ was assumed to be non-catastrophic, 
with Theorem \ref{Thm:lprime} we conclude that indeed $u(z)\in\mathbb F_q[z]^k$.
Finally, $d_{free}(\mathcal{C})\leq wt(v(z))=d_s^c(\mathcal{C})$, which completes the proof by \eqref{ineq}.
\end{proof}

The following example shows that the previous theorem is not true in general for catastrophic convolutional codes.

\begin{example}
Consider the catastrophic convolutional code with encoder $G(z)=\arraycolsep=5pt
\begin{bmatrix}
1-z & 1-z
\end{bmatrix}\in\mathbb F_3[z]^{1 \times 2}$. The column distances of this code can be computed as follows. As $u_0\neq 0$, one has for all $j\in\mathbb N_0$ and all $v(z)\in\mathcal{C}$ that $wt(v_{[0,j]})\geq wt(v_0)=2$. Moreover, choosing $\hat{u}(z)=\sum_{i=0}^d z^i$ for some $d\geq j$ results in a codeword $\hat{v}(z)=\hat{u}(z)G(z)$ with $wt(\hat{v}_{[0,j]})=2$, i.e. $d_j^c(\mathcal{C})=2$ for all $j\in\mathbb N_0$.\\
However, any non-zero codeword $v(z)\in\mathcal{C}$ will be of the form $v(z)=u(z)G(z)$ with $u(z)=\sum_{i=r}^su_iz^i$ where $u_r,u_s\neq 0$, $s\geq r$, i.e. $v(z)=\sum_{i=r}^{s+1}v_iz^i$ with $v_r=(u_r\ \ u_r)$ and $v_{s+1}=-(u_s\ \ u_s)$ and $wt(v(z))\geq wt(v_{s+1})+wt(v_r)=4$. Consequently, $d_{free}(\mathcal{C})=4>\lim_{j\rightarrow\infty} d_j^c(\mathcal{C})$.\\
If one would allow power series as inputs as it is done in \cite{Johannesson2015}, $u(z)=\sum_{i=0}^{\infty}z^i$ results in $v(z)=u(z)G(z)=(1\ \ 1)$, i.e. $wt(v(z))=d_{free}(\mathcal{C})=2=\lim_{j\rightarrow\infty} d_j^c(\mathcal{C})$.
\end{example}

The next theorem gives an upper bound for the $j-th$ column distance of a convolutional code $\mathcal{C}$ and shows that if $d_j^c(\mathcal{C})$ is maximal, then the same holds for $d_i^c(\mathcal{C})$ for all $i\leq j$.

\begin{theorem}[\hspace{-0.1mm}\cite{GRS:sMDS}]\label{Thij}
Let $\cal C$ be an $(n,k,\delta)$ delay-free convolutional code. For $j \in \mathbb N_0$,
$$
d^c_j ({\cal C})\leq (n-k)(j+1)+1.
$$
Moreover, if $d^c_j ({\cal C})= (n-k)(j+1)+1$ for some $j \in \mathbb N_0$, then $d^c_i({\cal C}) = (n-k)(i+1)+1$, for $i \leq j$.
\end{theorem}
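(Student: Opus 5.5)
The upper bound will come from a Singleton-type argument on the truncated sliding matrix. Write $G(z)=\sum_{i\ge 0}G_iz^i$ and let
\[
G_j^c=\begin{bmatrix} G_0 & G_1 & \cdots & G_j\\ & G_0 & \cdots & G_{j-1}\\ & & \ddots & \vdots\\ & & & G_0\end{bmatrix}\in\mathbb F_q^{k(j+1)\times n(j+1)},
\]
so that $v_{[0,j]}$ corresponds to $(v_0,\dots,v_j)=(u_0,\dots,u_j)G_j^c$. Because $G(z)$ is delay-free, $G_0$ has full row rank. Hence $v_0\neq 0$ if and only if $u_0\neq 0$, and $G_j^c$ has full row rank $k(j+1)$.

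For the bound, I first fix $u_0\neq 0$ with $u_0G_0$ of weight at most $n-k+1$. This is possible because the row space of $G_0$ is a $k$-dimensional code of length $n$, so we can prescribe zeros in $k-1$ coordinates. Then, for $i=1,\dots,j$, I choose $u_i$ recursively so that $v_i=u_iG_0+\sum_{l=1}^{i}u_{i-l}G_l$ has weight at most $n-k$. Since $G_0$ has full row rank, some $k$ columns of $G_0$ form an invertible $k\times k$ submatrix. We can therefore solve for $u_i$ so that $v_i$ vanishes on those $k$ coordinates, whatever the contribution of the earlier terms. The resulting truncated codeword has weight at most $(n-k+1)+j(n-k)=(n-k)(j+1)+1$, and it can be extended to an actual codeword of $\mathcal C$ by setting $u_i=0$ for $i>j$. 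This gives $d_j^c(\mathcal C)\le (n-k)(j+1)+1$.

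For the second statement, suppose $d_i^c(\mathcal C)<(n-k)(i+1)+1$ for some $i<j$. Take a codeword $v$ with $v_0\neq 0$ and $\mathrm{wt}(v_{[0,i]})\le (n-k)(i+1)$. I keep $u_0,\dots,u_i$ and extend $u_{i+1},\dots,u_j$ by the same recursive choice, making each $v_t$ with $i<t\le j$ of weight at most $n-k$. The truncation $v_{[0,j]}$ then has weight at most $(n-k)(i+1)+(j-i)(n-k)=(n-k)(j+1)$. This contradicts $d_j^c(\mathcal C)=(n-k)(j+1)+1$, so $d_i^c(\mathcal C)=(n-k)(i+1)+1$ for all $i\le j$, using the bound from the first part.

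The main point to check carefully is the extension step: for any prefix $u_0,\dots,u_{t-1}$ we must be able to choose $u_t$ making $v_t$ zero on a fixed set of $k$ coordinates. This holds exactly because of the delay-free assumption. The set of $k$ coordinates where $G_0$ restricts to an invertible matrix is independent of $t$, and no further structure of $G(z)$ is needed.
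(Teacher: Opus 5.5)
Your proof is correct and complete. You choose $u_0$ with $\operatorname{wt}(u_0G_0)\le n-k+1$, then choose each later $u_t$ so that $v_t$ vanishes on a fixed set of $k$ coordinates where $G_0$ is invertible; this gives the bound and also $d^c_j(\mathcal C)\le d^c_i(\mathcal C)+(j-i)(n-k)$, which yields the second claim. The paper gives no proof of its own for this theorem and simply cites \cite{GRS:sMDS}; your argument is the standard proof of this fact.
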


The free distance of an $(n,k,\delta)$ convolutional code is upper bounded as follows.

\begin{theorem}[\hspace{-0.1mm}\cite{RS:SingB}]
Let $\cal C$ be an $(n,k,\delta)$ convolutional code. Then
$$
d_{free}({\cal C}) \leq  (n-k)\left(\left\lfloor \frac{\delta}{k}   \right\rfloor+1\right) + \delta+1
$$
\end{theorem}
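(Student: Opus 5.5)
The plan is to prove the generalized Singleton bound by taking a minimal encoder $G(z)$ of $\mathcal{C}$ and exhibiting a single nonzero codeword $u(z)G(z)$, with $u(z)$ a constant vector, whose weight is at most the right-hand side. Such a codeword has degree at most the smallest row degree, so we only need to control that degree and then count positions. We may assume $G(z)$ is minimal, since every code admits a minimal encoder (row-reduce any encoder by unimodular operations), so its row degrees $\nu_1\le \nu_2\le\cdots\le\nu_k$ satisfy $\sum_i \nu_i=\delta$.

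First we bound the smallest row degree. Write $\delta = k\lfloor \delta/k\rfloor + t$ with $0\le t<k$. Exactly $t$ of the row degrees equal $\lfloor\delta/k\rfloor+1$ only in the balanced case; in general, since the $\nu_i$ sum to $\delta$, at least $k-t$ of them are at most $\lfloor\delta/k\rfloor$. In particular there are $k-t\ge 1$ rows of degree at most $\lfloor \delta/k\rfloor$.

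Next we combine those rows. Consider the $k-t$ rows with the smallest degrees, which form a submatrix $G'(z)$ of size $(k-t)\times n$, each row of degree at most $L:=\lfloor\delta/k\rfloor$. The $\mathbb F_q$-span of these rows (constant combinations) is a space of polynomial vectors of degree $\le L$. Viewed as vectors in $\mathbb F_q^{n(L+1)}$ via the coefficients, these rows are linearly independent over $\mathbb F_q$, because the rows of $G(z)$ are independent over $\mathbb F_q(z)$. So they span a linear block code of length $n(L+1)$ and dimension $k-t$. If we only used the Singleton bound for that block code, we would get a nonzero combination of weight $\le n(L+1)-(k-t)+1$, and this is not sharp enough. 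The refinement uses the minimality of $G(z)$: minimality means the highest-degree coefficient matrix $G_{hr}$ has full row rank (predictable degree property). We should instead use all $k$ rows and impose linear vanishing conditions.

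The main step, and the expected main obstacle, is this coordinate count. We choose $u(z)=\sum_i u^{(i)}(z)$ with $\deg u^{(i)}\le L-\nu_i$ where $\nu_i\le L$, and we also allow constants on the rows of degree exactly $L+1$ (restricted appropriately). The goal is a total of $kL+k-\delta+\dots$ free parameters in $\mathbb F_q$ such that $u(z)G(z)$ has degree $\le L$. Concretely, the number of free coefficients is $\sum_{\nu_i\le L}(L+1-\nu_i)$. Imposing that a chosen set of coordinates vanish lets us kill coordinates until the weight is at most the number of coordinates minus (free parameters $-1$). This yields weight $\le n(L+1)-\big((L+1)k-\delta\big)+1+\text{(correction)}$, which rearranges to $(n-k)(L+1)+\delta+1$ once we note $\sum_{\nu_i\le L}(L+1-\nu_i)\ge k(L+1)-\delta$. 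That inequality holds because any rows with $\nu_i=L+1$ contribute $0$ on the left and $0$ to $L+1-\nu_i$ on the right, and all $\nu_i\le L+1$ without loss. The delicate point is to guarantee that the resulting codeword is nonzero: the map $u\mapsto uG$ is injective because $G$ has full rank over $\mathbb F_q(z)$. We also need the rows of degree $>L+1$ case, which is handled by discarding those rows, since that only increases the left side relative to $k(L+1)-\delta$.
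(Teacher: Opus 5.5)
The paper gives no proof of this bound; it only cites \cite{RS:SingB}. Your final counting argument is correct, but it differs from the argument usually given there. That argument takes only \emph{constant} combinations of the $r$ rows of smallest degree $m\le\lfloor\delta/k\rfloor$ of a minimal encoder. It applies the block Singleton bound to the resulting $[n(m+1),r]$ code and uses $\delta\ge rm+(k-r)(m+1)$, i.e.\ $r\ge k(m+1)-\delta$. This gives $n(m+1)-r+1\le(n-k)(m+1)+\delta+1$, which is at most the claimed bound since $m\le\lfloor\delta/k\rfloor$.

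You instead pad every row with $\nu_i\le L$ by a polynomial multiplier of degree $\le L-\nu_i$, and set $u_i=0$ when $\nu_i>L$. This gives an injective $\mathbb F_q$-linear map into $\mathbb F_q^{n(L+1)}$ from a space of dimension $D=\sum_i\max(0,L+1-\nu_i)\ge\sum_i(L+1-\nu_i)=k(L+1)-\delta=k-t\ge 1$. The block Singleton bound then yields $n(L+1)-D+1\le(n-k)(L+1)+\delta+1$ in one step.

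Your route is uniform in the row-degree profile and needs no comparison between $m$ and $L$; the standard route is slightly more economical, using constants only. Both use minimality only through $\sum_i\nu_i=\delta$ and injectivity of $u\mapsto uG$. The predictable degree property, the ``constants on rows of degree $L+1$'', and the unspecified ``(correction)'' term play no role and should be dropped.

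Two earlier statements in your write-up are wrong, although your final argument does not depend on them.

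First, it is not true that at least $k-t$ row degrees are $\le\lfloor\delta/k\rfloor$. The minimal encoder with rows $(1,0,0,0)$, $(0,1,0,0)$, $(0,0,1,z^3)$ has $k=\delta=3$, hence $t=0$, yet only two rows have degree $\le 1$.

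Second, you dismiss the Singleton bound for $k-t$ rows of degree $\le L$ as not sharp enough. In fact $n(L+1)-(k-t)+1=(n-k)(L+1)+\delta+1$ exactly, since $t=\delta-kL$. So in the generic case that simple bound already suffices. What fails in general is the existence of $k-t$ such rows, which is exactly what your padding, or the minimal-degree argument above, repairs.
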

Thus, from \eqref{ineq} if $\mathcal{C}$ is an $(n,k,\delta)$ convolutional code, then
$$
L=\left\lfloor \frac{\delta}{k} \right\rfloor + \left\lfloor \frac{\delta}{n-k}  \right\rfloor
$$
is the largest integer $j$ such that 
the upper bound for $d_j^c$ from Theorem \ref{Thij}
 can be reached 
 (see \cite{GRS:sMDS}).

\begin{definition} An $(n,k,\delta)$ delay-free convolutional code with $d^c_j= (n-k)(j+1)+1$ for all $j \leq L$  is called a \textbf{Maximum Distance Profile (MDP)} convolutional code.
\end{definition}

MDP convolutional codes can be characterized by its encoders as follows. Let $G(z)=\displaystyle \sum_{i =0}^{\nu} G_i z^i \in \mathbb{F}_{q}[z]^{k \times n}$ be an encoder of an $(n,k,\delta)$ delay-free convolutional code of memory $\nu$. 
For $j \in \mathbb N_0$, define the matrix
\begin{equation}\label{Gj}
{\cal G}_j=\left[\begin{array}{cccc}
     G_0 & G_1 & \cdots & G_j \\
      & G_0 & \cdots & G_{j-1} \\
      & & \ddots & \vdots \\
      & & & G_0
\end{array}
\right] \in \mathbb{F}_{q}^{(j+1)k \times (j+1)n},
\end{equation}
where $G_j=0$ for $j>\nu$. This matrix is called the $j$-$th$ \textbf{sliding encoder matrix}.

\vspace{0.2cm}

\begin{theorem}[\hspace{-0.1mm}\cite{GRS:sMDS}] \label{check}
Let $\cal C$ be an $(n,k,\delta)$ delay-free convolutional code,\\ $G(z)=\displaystyle \sum_{i \in \mathbb N_0} G_i z^i \in \mathbb{F}_{q}[z]^{k \times n}$ an encoder of $\cal C$ and ${\cal G}_j$ be defined as in (\ref{Gj}). Then the following are equivalent:
\begin{enumerate}
    \item $d^c_j = (n-k)(j+1)+1$;
    \item every $(j+1)k \times (j+1)k$ full-size minor of ${\cal G}_j$ formed from the columns with indices $ 1 \leq t_1 < \cdots < t_{(j+1)k}$, where $t_{sk+1} > sn$, for $s=1,2, \dots, j$, is nonzero.
\end{enumerate}
\end{theorem}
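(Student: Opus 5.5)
We argue through the linear-algebraic meaning of the column distance. Truncated codewords with $v_0 \neq 0$ are exactly the vectors $(v_0,\dots,v_j)=(u_0,\dots,u_j)\,{\cal G}_j$ with $u_0\neq 0$. This holds because the truncation $v(z)_{[0,j]}$ depends only on $u_0,\dots,u_j$, any such tuple extends to a polynomial $u(z)$, and $G_0$ having full row rank gives $v_0\neq 0 \iff u_0\neq 0$. So $d_j^c=(n-k)(j+1)+1$ means that no vector $u{\cal G}_j$ with $u_0\neq 0$ has weight at most $(n-k)(j+1)$. Equivalently, for every set $Z$ of $(n-k)(j+1)$ column positions, no $u$ with $u_0\neq 0$ has $u{\cal G}_j$ vanishing on $Z$. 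Writing $T$ for the complementary set of $(j+1)k$ columns, this says that the only $u$ with $u{\cal G}_j$ supported inside $T$ has $u_0=0$.

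For (2) $\Rightarrow$ (1), take $u$ with $u_0\neq 0$ and suppose $\operatorname{wt}(u{\cal G}_j)\le (n-k)(j+1)$. We build an admissible column set $T$ of size $(j+1)k$ lying inside the zero positions plus extra positions, with ${\cal G}_j|_T$ singular, which contradicts (2). It is cleaner to run the induction of Theorem~\ref{Thij}. Let $s$ be minimal with $u_s\neq 0$. Projecting onto block columns $s,\dots,j$ gives a codeword of the shifted sliding matrix ${\cal G}_{j-s}$ with nonzero first block. So it suffices to handle $u_0\neq 0$, which we have. The support $S$ of $u{\cal G}_j$ has size at most $(n-k)(j+1)$, so its complement has at least $(j+1)k$ zeros. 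We must pick $T$ from the zero positions so that $T$ contains at least $(s+1)k$ indices among the first $(s+1)n$ columns for every $s$; this is the condition $t_{sk+1}>sn$ read prefix-wise. Then $u\,{\cal G}_j|_T=0$ with $u\neq 0$ forces the minor to vanish. This needs a counting argument: if the prefix through block $s$ had fewer than $(s+1)k$ zeros, the weight of $(v_0,\dots,v_s)$ would exceed $(n-k)(s+1)$. That would contradict nothing directly, so we instead argue by minimal counterexample in $j$. If some prefix through block $s<j$ is too heavy, then the tail blocks $s+1,\dots,j$ carry weight less than $(n-k)(j-s)$. Hence we must show that a codeword of ${\cal G}_j$ restricted to its tail cannot be that light, which is handled by choosing $T$ greedily, taking the first available zero positions within each prefix. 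Getting this counting exactly right, so that the selected $T$ satisfies $t_{sk+1}>sn$ for every $s$, is the main obstacle. We expect to settle it by choosing the largest $s$ where the prefix condition fails and reducing to a smaller sliding matrix via the upper block-triangular structure.

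For (1) $\Rightarrow$ (2), suppose an admissible minor on columns $T$ vanishes. Then some $u\neq 0$ has $u{\cal G}_j$ zero on $T$. Let $s$ be the least index with $u_s\neq 0$. Shifting by $z^{-s}$, which is allowed since the code is a submodule and we can restrict to $u$'s tail, gives a codeword with nonzero first block. Its truncation to $[0,j-s]$ vanishes on the shifted $T$ restricted to blocks $s,\dots,j$. The admissibility condition guarantees at least $(j-s+1)k$ columns of $T$ among blocks $s,\dots,j$. Therefore the weight is at most $(j-s+1)n-(j-s+1)k$, so $d_{j-s}^c\le (n-k)(j-s+1)$. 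By the monotonicity part of Theorem~\ref{Thij}, this contradicts optimality of $d_j^c$.
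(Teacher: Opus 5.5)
The paper gives no proof of this theorem; it quotes it from \cite{GRS:sMDS}. So the only question is whether your argument is complete. Your direction (1)$\Rightarrow$(2) is correct. There you read $t_{sk+1}>sn$ properly as $|T\cap\{1,\dots,sn\}|\le sk$, so at least $(j-s+1)k$ columns of $T$ lie in blocks $s,\dots,j$. The first nonzero $u_s$ then gives the input $\sum_{i=0}^{j-s}u_{s+i}z^i$, whose truncated codeword has nonzero first block and weight at most $(n-k)(j-s+1)$. This contradicts optimality of $d^c_{j-s}$ (Theorem~\ref{Thij}).

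The direction (2)$\Rightarrow$(1) has a genuine gap, for two reasons.
\begin{itemize}
\item You reverse the admissibility condition there. You ask for \emph{at least} $(s+1)k$ indices among the first $(s+1)n$ columns, but admissibility means \emph{at most} $sk$ among the first $sn$. Any set violating admissibility gives a trivially zero minor, because of the block-triangular shape: the last $(j+1-s)k$ rows are supported on blocks $s,\dots,j$. A set meeting your rule is admissible only if it has exactly $k$ elements in every block, so a singular $\mathcal{G}_j|_T$ found by your rule generally contradicts nothing in (2).
\item For the same reason you place the difficulty wrongly. The obstruction is a \emph{light prefix}, equivalently a heavy tail, not a heavy prefix. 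The claim you propose to prove, that the tail $(v_{s+1},\dots,v_j)$ cannot be light, is not controlled by any smaller instance of (2). That tail is not a truncated codeword with nonzero first block; it depends on $u_0,\dots,u_s$ through the off-diagonal blocks.
\end{itemize}
Beyond this, the choice of $T$, which is the whole content of this direction, is left as an expectation. There is also a smaller slip in your first paragraph: the roles of $Z$ and $T$ are swapped. Weight at most $(n-k)(j+1)$ means vanishing on some set of $(j+1)k$ positions, not on a set of $(n-k)(j+1)$ positions.

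The missing argument is as follows.
\begin{enumerate}
\item Let $v=u\mathcal{G}_j$ with $u_0\neq 0$ and $\operatorname{wt}(v)\le (n-k)(j+1)$. Let $m\le j$ be minimal with $\operatorname{wt}(v_0,\dots,v_m)\le (n-k)(m+1)$.
\item Minimality gives $\operatorname{wt}(v_0,\dots,v_{s-1})\ge (n-k)s+1$ for $1\le s\le m$. Hence $(v_s,\dots,v_m)$ has more than $(m+1-s)k$ zeros, while blocks $0,\dots,m$ contain at least $(m+1)k$ zeros.
\item Let $T_0$ be the \emph{last} $(m+1)k$ zero positions in blocks $0,\dots,m$. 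Then $|T_0\cap\{1,\dots,sn\}|\le sk$ for $1\le s\le m$.
\item Fix $I\subseteq\{1,\dots,n\}$ with $|I|=k$ and $G_0|_I$ invertible, which exists by delay-freeness. Redefine $u_{m+1},\dots,u_j$ successively so that $v_t$ vanishes on $I$ for $t=m+1,\dots,j$. By the block-triangular structure this leaves $v_0,\dots,v_m$ unchanged.
\item Set $T=T_0\cup\bigcup_{t=m+1}^{j}(tn+I)$. It has $(j+1)k$ elements and satisfies $|T\cap\{1,\dots,sn\}|\le sk$ for all $s=1,\dots,j$, with equality for $s>m$.
\item Since $u\mathcal{G}_j$ vanishes on $T$ and $u_0\neq 0$, this admissible minor is zero, contradicting (2).
\end{enumerate}
This is the ``reduction to a smaller sliding matrix'' you anticipated. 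It needs three ingredients you did not have: the minimal-prefix choice, greedy selection from the end rather than the beginning, and padding by an information set of $G_0$.
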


Note that if $w(z)=u(z)G(z)$ with $u(z)=\sum_{i \in \mathbb N_0} u_i z^i $ and $w(z)=\sum_{i \in \mathbb N_0} w_i z^i$ then
$$
\left[\begin{array}{cccc}
     w_0 & w_1 & \cdots & w_j
\end{array}
\right]=\left[\begin{array}{cccc}
     u_0 & u_1 & \cdots & u_j
\end{array}
\right]{\cal G}_j.
$$

Therefore, if the received codeword contains erasures after transmission, the information sequence can be recovered by considering the equations corresponding to the correct symbols. 
Let $\bar{E}$ be the set of indices in $\{1,\dots, (j+1)n\}$ such that the corresponding components of $\left[\begin{array}{cccc}
     w_0 & w_1 & \cdots & w_j
\end{array}
\right]$ are not erasures, $\mathcal{G}_j^r$ be the submatrix of ${\cal G}_j$ with columns with indices in $\bar E$ and $\left[\begin{array}{cccc}
     w_0 & w_1 & \cdots & w_j
\end{array}
\right]^r$ be the subvector of $\left[\begin{array}{cccc}
     w_0 & w_1 & \cdots & w_j
\end{array}
\right]$ with indices in $\bar E$. Then
$$
\left[\begin{array}{cccc}
     w_0 & w_1 & \cdots & w_j
\end{array}
\right]^r=\left[\begin{array}{cccc}
     u_0 & u_1 & \cdots & u_j
\end{array}
\right]{\cal G}_j^r.
$$

If the coefficient matrix $\mathcal{G}_j^r$ is full row rank, the vector $\left[\begin{array}{cccc} u_0 & u_1 & \cdots & u_j \end{array} \right]$ can be successfully obtained.

If this is not the case, but none of the first $k$ rows of $\mathcal{G}_j^r$ lies in the span of the other rows of $\mathcal{G}_j^r$, then at least $u_0$ can be recovered, and we can proceed by considering the next window $\left[\begin{array}{cccc} u_1 & u_2 & \cdots & u_{j+1} \end{array} \right]$.

Thus, the following lemma immediately follows.

\vspace{0.2cm}

\begin{lemma}[\hspace{-0.1mm}\cite{TRS:Decod}]
    Let $\mathcal{C}$ be an $(n,k,\delta)$ MDP convolutional code. If in any sliding window of length $(j_0+1)n$ at most $(j_0+1)(n-k)$ erasures occur, with $j_0\in\{0,1,\dots,L\}$, then we can completely recover the transmitted sequence.
\end{lemma}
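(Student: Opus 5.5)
We argue by induction on the time index, recovering the information vectors $u_0,u_1,u_2,\dots$ one block at a time with a sliding window of $j_0+1$ blocks. The key fact is a standard consequence of Theorem~\ref{check} (cf.\ \cite{GRS:sMDS}): since $\mathcal{C}$ is MDP and $j_0\le L$, Theorem~\ref{Thij} gives $d^c_{j_0}(\mathcal{C})=(n-k)(j_0+1)+1$. We claim that whenever at most $(j_0+1)(n-k)$ of the $(j_0+1)n$ columns of $\mathcal{G}_{j_0}$ are deleted, none of the first $k$ rows of the reduced matrix $\mathcal{G}_{j_0}^r$ lies in the span of the remaining rows. We prove this claim by contradiction using the column distance. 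Suppose some nontrivial combination $[u_0\ u_1\ \cdots\ u_{j_0}]\mathcal{G}_{j_0}^r=0$ with $u_0\neq 0$. Then $[v_0\ \cdots\ v_{j_0}]=[u_0\ \cdots\ u_{j_0}]\mathcal{G}_{j_0}$ is supported only on the erased positions. Extending $u$ to a polynomial $u(z)$ with these first coefficients gives a codeword $v(z)=u(z)G(z)$. Because $G_0$ has full row rank (delay-free), $v_0\neq 0$. Hence $\operatorname{wt}(v_{[0,j_0]})\le (j_0+1)(n-k)<d^c_{j_0}$, a contradiction.

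With the claim in hand, the recovery proceeds as follows. At step $t$ we assume $u_0,\dots,u_{t-1}$ have already been recovered. Consider the window of received blocks $w_t,\dots,w_{t+j_0}$. Their contribution from $u_0,\dots,u_{t-1}$ is known: it is $\sum_{i<t}u_iG_{m-i}$ in block $m$. Subtracting it from the non-erased components leaves the system
\[
[w_t\ \cdots\ w_{t+j_0}]^r-(\text{known part})^r=[u_t\ \cdots\ u_{t+j_0}]\,\mathcal{G}_{j_0}^r,
\]
with the same block-Toeplitz matrix $\mathcal{G}_{j_0}$ restricted to the non-erased columns. By hypothesis this window contains at most $(j_0+1)(n-k)$ erasures, so the claim applies. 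The first $k$ rows of $\mathcal{G}_{j_0}^r$ are then independent of the others, which means $u_t$ is uniquely determined: any two solutions differ by a vector in the left kernel of $\mathcal{G}_{j_0}^r$, and every such vector has zero first block. We then advance $t$ by one. Once all the $u_t$ are known, the codeword is recovered via $v(z)=u(z)G(z)$. Since $u(z)$ is a polynomial, finitely many steps suffice; the tail windows may be padded with the known zero blocks, which only helps.

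The main obstacle is the claim, specifically converting a linear dependence in the truncated matrix into a genuine codeword with $v_0\neq0$. This requires care that the truncation $v_{[0,j_0]}$ depends only on $u_0,\dots,u_{j_0}$, which is guaranteed by the upper-triangular structure of $\mathcal{G}_{j_0}$. Once that is checked, the rest is routine bookkeeping of the sliding window.
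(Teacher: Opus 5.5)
Your proof is correct and follows the same sliding-window argument the paper sketches before the lemma. You subtract the known contribution of $u_0,\dots,u_{t-1}$, restrict $\mathcal{G}_{j_0}$ to the non-erased columns, read off $u_t$ because the first block of rows is independent of the rest, and then slide the window. You also make explicit, via $d^c_{j_0}=(n-k)(j_0+1)+1$, why every left-kernel vector of $\mathcal{G}_{j_0}^r$ has zero first block. The paper leaves that step implicit with ``immediately follows'', and your left-kernel formulation is the precise condition needed. Note that this value of $d^c_{j_0}$ comes straight from the MDP definition, so Theorem~\ref{Thij} is not needed.
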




Note that if $G(z)$ is an encoder of memory $\nu$ and $w(z)=u(z)G(z)$ is the transmitted codeword, with $u(z)=\sum_{i \in \mathbb N_0} u_i z^i $, $w(z)=\sum_{i \in \mathbb N_0} w_i z^i$ and $G(z)=\sum_{i =0}^{\nu} G_i z^i$, to recover $u_0$ we consider equations in
\begin{equation}\label{recu0}
\left[\begin{array}{cccc}
     w_0 & w_1 & \cdots & w_{\nu}
\end{array}
\right]=\left[\begin{array}{cccc}
     u_0 & u_1 & \cdots & u_{\nu}
\end{array}
\right]{\cal G}_j,
\end{equation}
since $w_j$ does not depend of $u_0$, for $j > \nu$. Thus, if $\left[\begin{array}{cccc}
     w_0 & w_1 & \cdots & w_{\nu}
\end{array}
\right]$ has at least $n(\nu+1)-k+1$ erasures, then the number of equations in (\ref{recu0}) that we can use to recover $u_0$ is less than $k$ and consequently it will not be possible to obtain $u_0$.

\section{Pseudo-MDP $(n,k,\delta)$ Convolutional Codes}

In this section we will consider $(n,k,\delta)$ convolutional codes, such that $k | \delta$,  with optimal correction capabilities when considering windows up to size $(\nu+1)n$ for $\nu=\frac{\delta}{k}-1$ and with capability to correct burst of erasures of a certain length.   
For the decoding we use a minimal encoder and proceed as described in the previous section.

\begin{definition}
Let $G(z)$ be a minimal encoder of the $(n,k,\delta)$ convolutional code $\cal C$. Let $u(z)=\sum_{i \in \mathbb N} u_i z^i \in \mathbb{F}_{q}[z]^k$ be a message and let $w(z)=\sum_{i \in \mathbb N} w_i z^i \in \mathbb{F}_{q}[z]^n = u(z)G(z)$ be the corresponding codeword.

    $\mathcal{C}$ is said to be able to correct bursts of length $(b+1)n$ with \textbf{burst delay} $d$ and \textbf{symbol delay} t if for any 
 $w(z)$ where for some $i\in\mathbb N_0$, $w_i,\hdots, w_{i+b}$ are erased completely, i.e. each of these vectors has all $n$ components erased, there exists $e\in\mathbb N_0$ such that if each of the vectors $w_{i+b+1}, \hdots,w_{i+b+d}$ contains at most $e$ erasures, 
then $u_{i+j}$ can be recovered with the knowledge of $w_{i+b+1},\hdots, w_{i+\min\{j+t,b+d\}}$ for all $j\in\{0,\hdots, b+d\}$.

\end{definition}

\begin{definition}
    Let $\cal C$ be an $(n,k,\delta)$ convolutional code, such that $k | \delta$ and $n \geq \delta$, and set $\nu :=\frac{\delta}{k}-1$. $\cal C$ is said to be a \textbf{Pseudo-MDP} convolutional code if the following two conditions are fulfilled.
    \begin{enumerate}
        \item  $\cal C$ has optimal $\nu$-th column distance, i.e., $d_{\nu}^c(\mathcal{C})=(n-k)(\nu+1)+1$.
        \item $\cal C$ is able to correct bursts of length $(\nu+1)n$ with burst delay $2$ and symbol delay $\nu+2$ with $e=n-\delta$.
    \end{enumerate} 
\end{definition}





Pseudo-MDP $(n,k, \delta)$ convolutional codes can be obtained from MDP $(n,k, \delta-k)$ convolutional codes as the following theorem shows.

\begin{theorem}\label{th1}
    \label{defPseudo} 
   Let $G(z)=\sum_{i=0}^{\nu}G_iz ^i$ be a minimal encoder of an MDP $(n,k, \delta-k)$ convolutional code $\cal C$, where $k | \delta$ and $n\geq\delta$. This implies $ \nu =\frac{\delta}{k}-1$. Consider 
   $$\bar G(z)=\sum_{i=0}^{\nu}G_iz ^i + G_{\ell}z^{\nu+1},$$
where $\ell \in \{0,1, \dots, \nu\}$. This matrix is an encoder of an $(n,k,\delta)$ Pseudo-MDP convolutional code $\bar{\cal C}$. 
\end{theorem}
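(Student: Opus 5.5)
The plan is to verify the two defining conditions of a Pseudo-MDP code separately, after a short preliminary step showing that $\bar G(z)$ really encodes an $(n,k,\delta)$ code. Since $\mathcal C$ is MDP of degree $\delta-k$ with $k\mid(\delta-k)$, I will first argue that the minimal encoder $G(z)$ has all row degrees equal to $\nu=\frac{\delta}{k}-1$. Then every row of $\bar G(z)$ has degree $\nu+1$, and the highest coefficient matrix of $\bar G(z)$ is $G_\ell$.

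\textbf{Step 1: $\bar G(z)$ has degree $\delta$ and is minimal.} The coefficient of $z^{k(\nu+1)}=z^{\delta}$ in a full-size minor of $\bar G(z)$ is the corresponding $k\times k$ minor of $G_\ell$. It therefore suffices that $G_\ell$ has full row rank. I will extract this from Theorem \ref{check} applied to the MDP code $\mathcal C$: choose columns of the sliding matrix $\mathcal G_\ell$ (which is admissible since $\ell\le\nu\le L$) so that the resulting admissible minor factors as a power of $\det$ of a $k\times k$ block of $G_0$ times a $k\times k$ minor of $G_\ell$. This forces some $k\times k$ minor of $G_\ell$ to be nonzero. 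The column selection that makes the block-triangular factorisation work is the delicate part here. Once $G_\ell$ has full rank, the degree is exactly $k(\nu+1)=\delta$, and $\bar G(z)$ is minimal. Moreover $\bar G(0)=G_0$, so $\bar{\mathcal C}$ is delay-free.

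\textbf{Step 2: optimal $\nu$-th column distance.} The sliding matrices satisfy $\bar{\mathcal G}_\nu=\mathcal G_\nu$, because the new coefficient only appears at $z^{\nu+1}$. Since $\nu\le L$ for $\mathcal C$, the MDP property together with Theorem \ref{check} gives $d^c_\nu(\bar{\mathcal C})=(n-k)(\nu+1)+1$. This part is immediate.

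\textbf{Step 3: burst correction, the main obstacle.} Assume a burst $w_i,\dots,w_{i+\nu}$ is completely erased. By sliding-window induction, $u_0,\dots,u_{i-1}$ are known, and their contributions are subtracted. The received blocks $w_{i+\nu+1}$ and $w_{i+\nu+2}$ each have at most $n-\delta$ erasures, so each keeps at least $\delta=k(\nu+1)$ correct symbols. I will write these two blocks as linear equations in the unknowns $u_i,\dots,u_{i+\nu+2}$. The key structural feature to exploit is that $u_i$ enters $w_{i+\nu+1}$ only through $G_\ell$, the same matrix that multiplies $u_{i+\nu+1-\ell}$. After suitable column restriction, the coefficient matrix should coincide with a submatrix of a sliding matrix $\mathcal G_j$ of the original MDP code with admissible column pattern. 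Its relevant minors are then nonzero by Theorem \ref{check}, which would give recovery of $u_i$. Recovery of the subsequent $u_{i+j}$ would follow by shifting the window, within the symbol delay $\nu+2$.

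The hard part is the count of equations against unknowns, together with showing that the correct $k$ rows (those of $u_i$) are not in the span of the others. My first attempt is to eliminate the later unknowns using the $n\ge\delta$ correct symbols and the nonvanishing admissible minors of $\mathcal G_\nu$. If that is not enough, I will rework the indexing of which $u$'s are genuinely unknown, so that the system reduces to a square admissible minor of $\mathcal G_\nu$.
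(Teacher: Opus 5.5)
\textbf{Step 3 has a genuine gap.} Steps 1 and 2 are fine. Step 2 is exactly the paper's argument. Step 1, which the paper does not even check, does not need your delicate factorisation. Let $B\in\mathbb F_q^{\delta\times n}$ be the matrix obtained by stacking $G_\nu,G_{\nu-1},\dots,G_0$. It is the last block column of $\mathcal G_\nu$, and column sets lying inside that block are admissible. Since $n\ge\delta$ and $\nu\le L$, Theorem \ref{check} then says every $\delta\times\delta$ minor of $B$ is nonzero. Hence $B$ has full row rank, and so does every $G_\ell$; the case $G_\nu$ gives your row-degree claim.

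Step 3 is the actual content of the theorem, and you leave it as a list of things to try. The route you sketch also does not work as stated. The joint coefficient matrix of $(u_i,\dots,u_{i+\nu+2})$ against the columns of $(w_{i+\nu+1}\ w_{i+\nu+2})$ has block rows $(G_\ell\ \ 0),(G_\nu\ \ G_\ell),(G_{\nu-1}\ \ G_\nu),\dots,(G_0\ \ G_1),(0\ \ G_0)$. It contains an extra copy of $G_\ell$, and it has $G_\ell$ where a sliding matrix of $\mathcal C$ would have $G_{\nu+1}=0$. So the MDP property of $\mathcal C$ says nothing directly about its minors; indeed the paper later imposes nonvanishing of its minors as a separate hypothesis. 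Moreover, within $w_{i+\nu+1}$ alone the rows of $u_i$ are literally equal to those of $u_{i+\nu+1-\ell}$. So $u_i$ can only be separated by combining the two blocks.

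You already noticed that $u_i$ and $u_{i+\nu+1-\ell}$ share the coefficient $G_\ell$. The missing idea is to exploit this by a change of variables rather than a new minor computation. Absorb $u_i$ into $u_i+u_{i+\nu+1-\ell}$ and $u_{i+1}$ into $u_{i+1}+u_{i+\nu+2-\ell}$, so that
\begin{align*}
w_{i+\nu+1}&=(u_{i+1},\dots,u_i+u_{i+\nu+1-\ell},\dots,u_{i+\nu+1})\,B,\\
w_{i+\nu+2}&=(u_{i+2},\dots,u_{i+1}+u_{i+\nu+2-\ell},\dots,u_{i+\nu+2})\,B.
\end{align*}
Each block keeps at least $\delta$ unerased entries, and every $\delta\times\delta$ minor of $B$ is nonzero. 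So both grouped vectors are recovered independently, and their overlap untangles them:
for $1\le\ell\le\nu-1$, the second vector contains $u_{i+\nu+1-\ell}$ uncontaminated;
for $\ell=\nu$, read $u_{i+2}$ from the first vector, then $u_{i+1}$, then $u_i$;
for $\ell=0$, read $u_{i+\nu+1}$ from the second vector.

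This yields all of $u_i,\dots,u_{i+\nu+2}$ from $w_{i+\nu+1},w_{i+\nu+2}$ alone, which is what burst delay $2$ with symbol delay $\nu+2$ requires. Shifting the window into $w_{i+\nu+3}$, whose erasures are unconstrained, is not allowed. The past $u_0,\dots,u_{i-1}$ plays no role, because $\bar G$ has memory $\nu+1$.

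Your worry about counting is justified in exactly one case. We have $2\delta\ge(\nu+3)k$ if and only if $\nu\ge1$, and for $\nu\ge1$ the substitution above is injective. For $\nu=0$ (i.e.\ $\delta=k$) one gets $\bar G(z)=(1+z)G_0$. Then only $u_i+u_{i+1}$ and $u_{i+1}+u_{i+2}$ are determined, and $u_i$ cannot be recovered. So the statement, and the paper's proof, tacitly need $\delta\ge 2k$.
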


\begin{proof}
Obviously, $$d_j^c(\bar{\mathcal{C}})=d_j^c(\mathcal{C})=(n-k)(j+1)+1$$ for $j=0,\hdots,\nu$.
%
%
%
Let us assume now that for some $i \in \mathbb N_0$, $w_{i}, \dots, w_{i+\nu}$ are completely erased and $ w_{i+\nu+1}$ and $ w_{i+\nu+2}$ contain at most $n-\delta$ erasures. One has
\begin{align*}
    w_{i+\nu+1} &=
    \begin{pmatrix}
        u_{i+1} & \cdots & u_{i+\nu - \ell} &
        u_i + u_{i+\nu - \ell+1} & u_{i+\nu - \ell+2} & \cdots & u_{i+\nu+1}
    \end{pmatrix}
    \begin{pmatrix}
        G_{\nu} \\ G_{\nu-1} \\ \vdots \\ G_0
    \end{pmatrix} \\
    w_{i+\nu+2} &=
   \begin{pmatrix}
        u_{i+2} & \cdots & u_{i+\nu - \ell+1} &
        u_{i+1} + u_{i+\nu - \ell+2} & u_{i+\nu - \ell+3} & \cdots & u_{i+\nu+2}
    \end{pmatrix}
    \begin{pmatrix}
        G_{\nu} \\ G_{\nu-1} \\ \vdots \\ G_0
    \end{pmatrix}
\end{align*}

Since $\mathcal{C}$ is MDP with $L=\lfloor\frac{\delta-k}{k}\rfloor+\lfloor\frac{\delta-k}{n-k}\rfloor\geq\nu$, we know from Theorem \ref{check} that all fullsize minors of $\begin{pmatrix}
        G_{\nu} \\ G_{\nu-1}\\ \vdots\\ G_0
    \end{pmatrix}$ are nonzero, i.e. up to $n-(\nu+1)k=n-\delta$ erasures in $w_{i+\nu+1}$ and in $w_{i+\nu+2}$ can be recovered. Hence, we can recover 
    \begin{align*}&u_{i+1}, \dots, u_{i+\nu - \ell},
        u_i + u_{i+\nu - \ell+1}, u_{i+\nu - \ell+2}, \dots, u_{i+\nu+1}, u_{i+\nu - \ell+1},  u_{i+1} + u_{i+\nu - \ell+2},\\
        &u_{i+\nu - \ell+3},\hdots, u_{i+\nu+2} 
        \end{align*}
        which is equivalent to recovering $u_i, u_{i+1}, u_{i+2},\hdots,u_{i+\nu+1},u_{i+\nu+2}$. 
\end{proof}

The following remark shows that it is not possible to recover the given burst of erasures with the encoder $G(z)$ of the $(n,k,\delta-k)$ MDP convolutional code.

\begin{remark}
Note that since $G(z)$ has memory $\nu$, when using $\mathcal{C}$, $w_j$ for $j\geq i+\nu+1$ does not depend on $u_i$ and thus, if $w_i,\hdots, w_{i+\nu}$ are fully erased, $u_i$ cannot be recovered with $G(z)$.
\end{remark}

The following example illustrates how bursts of erasures can be corrected with Pseudo-MDP convolutional codes.

\begin{example}\label{ex1}
Consider the $(3,1,2)$ convolutional code $\cal C$ defined over the field $\mathbb F_{29}$ with encoder
    $$
    G(z)=\begin{pmatrix}
        1 & 2 & 3
    \end{pmatrix} +
    \begin{pmatrix}
       4 & 5 & 6
    \end{pmatrix} z+
    \begin{pmatrix}
        1 & 2 & 7
    \end{pmatrix} z^2.
    $$
    It can be confirmed that $\cal C$ is MDP by Theorem \ref{check}.
    Let us now consider the $(3,1,3)$ Pseudo-MDP convolutional code $\bar{\cal C}$ over the same field with encoder \begin{eqnarray*}\bar G(z) & = & G(z) +  \begin{pmatrix}
        1 & 2 & 7
    \end{pmatrix} z^3
\\& = &\begin{pmatrix}
        1 & 2 & 3
    \end{pmatrix} +
    \begin{pmatrix}
       4 & 5 & 6
    \end{pmatrix} z+
    \begin{pmatrix}
        1 & 2 & 7
    \end{pmatrix} z^2 +  \begin{pmatrix}
        1 & 2 & 7
    \end{pmatrix} z^3.
   \end{eqnarray*}
Let us assume that a codeword $w(z)=\sum_{i \in \mathbb N_0} w_i z^i$ is transmitted and we receive up to time instant $t=4$, $w_0, w_1$ and $w_2$ completely erased and $w_3=\begin{pmatrix}
        9 & 9 & 8
    \end{pmatrix}$ and $w_4=\begin{pmatrix}
        6 & 4 & 5
    \end{pmatrix}$. Thus, since
    $$
   w_3=\begin{pmatrix}
        u_0+u_1 & u_2 & u_3
    \end{pmatrix}
    \begin{pmatrix}
       1 & 2 & 7 \\ 4 & 5 & 6 \\ 1 & 2 & 3
    \end{pmatrix}
    $$
    and $$
     w_4=\begin{pmatrix}
        u_1+u_2 & u_3 & u_4
    \end{pmatrix}
    \begin{pmatrix}
       1 & 2 & 7 \\ 4 & 5 & 6 \\ 1 & 2 & 3
    \end{pmatrix}
    $$
    we obtain
    \begin{eqnarray*}
   \begin{pmatrix}
        u_0+u_1 & u_2 & u_3
    \end{pmatrix} &=&\begin{pmatrix}
          9 & 9 & 8
    \end{pmatrix}\begin{pmatrix}
       1 & 2 & 7 \\ 4 & 5 & 6 \\ 1 & 2 & 3
    \end{pmatrix}^{-1}\\
     & = & \begin{pmatrix}
        3 &  3 & 7
    \end{pmatrix}
    \end{eqnarray*}
    and
    \begin{eqnarray*}
   \begin{pmatrix}
       u_1+u_2 & u_3 & u_4
    \end{pmatrix} &=&\begin{pmatrix}
       6 & 4 & 5
    \end{pmatrix}\begin{pmatrix}
       1 & 2 & 7 \\ 4 & 5 & 6 \\ 1 & 2 & 3
    \end{pmatrix}^{-1}\\
     & = & \begin{pmatrix}
        4 & 7 & 0
    \end{pmatrix}
    \end{eqnarray*}
    and consequently $u_0=2$, $u_1=1$, $u_2=3$, $u_3=7$ and $u_4=0$.
\end{example}

\begin{remark}
Note that in Theorem \ref{th1} it is only required that the convolutional code $\cal C$ has optimal $j-th$ column distances for $j \leq \nu$. These codes are called $\nu$-MDP convolutional codes and were defined in \cite{AL:jMDP}. Thus, this kind of constructions could also be defined for $\nu$-MDP convolutional codes.
\end{remark}

In Theorem \ref{th1}, Pseudo-MDP convolutional codes are constructed from MDP convolutional codes. However, they  may not be MDP and it can even happen that $d_{\nu+1}^c(\bar{\mathcal{C}})$ is smaller than $d_{\nu+1}^c(\mathcal{C})$.



In fact, as the following theorem shows, for $n > \delta$,
 Pseudo-MDP convolutional codes as constructed in Theorem \ref{th1} cannot be MDP.

\begin{theorem}
Let $G(z) = \sum_{i=0}^{\nu} G_i z^i$ be an encoder of an MDP $(n, k, \delta-k)$ convolutional code $\mathcal{C}$, where $\delta = (\nu+1) k$. Let
\[
\bar{G}(z) = \sum_{i=0}^{\nu} G_i z^i + G_{\ell} z^{\nu+1},
\]
with $\ell \in \{0,1,\dots, \nu\}$,
be an encoder of a Pseudo-MDP $(n,k,\delta)$ convolutional code $\bar{\mathcal{C}}$. If $$n > (\nu + 1)k,$$ then $\bar{\mathcal{C}}$ is not an MDP convolutional code.
\end{theorem}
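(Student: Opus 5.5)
The plan is to show that $\bar{\mathcal{C}}$ fails to have an optimal $(\nu+1)$-th column distance. First I check that this is required for MDP. For the $(n,k,\delta)$ code $\bar{\mathcal{C}}$ with $\delta=(\nu+1)k$ we have
$$L=\left\lfloor \tfrac{\delta}{k}\right\rfloor+\left\lfloor \tfrac{\delta}{n-k}\right\rfloor\geq \nu+1 .$$
So if $\bar{\mathcal{C}}$ were MDP, we would have $d^c_{\nu+1}(\bar{\mathcal{C}})=(n-k)(\nu+2)+1$. It therefore suffices to exhibit a codeword $v(z)\in\bar{\mathcal{C}}$ with $v_0\neq 0$ and $\operatorname{wt}(v_{[0,\nu+1]})\leq (n-k)(\nu+2)$.

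The codeword I would use comes from the input
$$u(z)=u_0\bigl(1-z^{\nu+1-\ell}\bigr),\qquad u_0\in\mathbb F_q^k\setminus\{0\},$$
and I compute the coefficients of $v(z)=u(z)\bar G(z)$ up to degree $\nu+1$.
\begin{itemize}
\item For $j<\nu+1-\ell$ the coefficient is $v_j=u_0G_j$.
\item For $\nu+1-\ell\leq j\leq\nu$ it is $v_j=u_0(G_j-G_{j-(\nu+1-\ell)})$.
\item At degree $\nu+1$ the contribution $u_0G_\ell$ from $z^{\nu+1}$ in $\bar G$ cancels against $-u_0G_\ell$. The latter comes from $z^{\nu+1-\ell}$ times the term $G_\ell z^\ell$. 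Hence $v_{\nu+1}=0$.
\end{itemize}
The whole block at time $\nu+1$ thus vanishes. The key point is that the added term $G_\ell z^{\nu+1}$ repeats a coefficient already present in $G(z)$, so it can be cancelled by a shifted copy.

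Next I choose $u_0$ to make $v_0$ light. Since $G(z)$ is delay-free, $G_0$ has full row rank $k$. Imposing $k-1$ homogeneous linear conditions, namely that $k-1$ prescribed coordinates of $u_0G_0$ vanish, leaves a nonzero solution $u_0$. Full rank of $G_0$ gives $v_0=u_0G_0\neq 0$, and by construction $\operatorname{wt}(v_0)\leq n-k+1$. Bounding each of $v_1,\dots,v_\nu$ trivially by $n$ gives
$$\operatorname{wt}(v_{[0,\nu+1]})\leq (n-k+1)+\nu n .$$
This is at most $(n-k)(\nu+2)$ exactly when $1\leq n-(\nu+1)k$, which is the hypothesis $n>(\nu+1)k$. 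Therefore $d^c_{\nu+1}(\bar{\mathcal{C}})\leq (n-k)(\nu+2)<(n-k)(\nu+2)+1$, and $\bar{\mathcal{C}}$ is not MDP.

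The main obstacle is getting the weight count to match the hypothesis exactly. A naive bound of $n(\nu+1)$ only works when $n\geq (\nu+2)k$. The trick of forcing $k-1$ zeros in $v_0$ is what recovers precisely the condition $n>(\nu+1)k$. I would also double-check the edge case $\ell=0$. There the shift is $z^{\nu+1}$, the middle range of indices is empty, and the cancellation occurs only at degree $\nu+1$, so the argument goes through unchanged.
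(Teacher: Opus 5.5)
Your proposal is correct and follows the paper's own argument. You use the same input ($u_{\nu+1-\ell}=-u_0$, all other $u_i=0$) to cancel the block at time $\nu+1$, and the same estimate $(n-k+1)+\nu n<(n-k)(\nu+2)+1$, which is equivalent to $n>(\nu+1)k$. Your kernel argument for a nonzero $u_0$ with $\operatorname{wt}(u_0G_0)\le n-k+1$ just spells out what the paper takes for granted when it picks $u_0$ with $\operatorname{wt}(u_0G_0)=n-k+1$.
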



\begin{proof} Let us consider the $(n,k,\delta-k)$ convolutional code $\mathcal{C}$. If $n > (\nu + 1)k$, i.e., $ n > \delta $, then
\[
L = \nu + \left\lfloor \frac{\delta-k}{n - k} \right\rfloor
\]
is equal to $\nu$.

Since $\mathcal{C}$ is an MDP convolutional code, we have $d_j^c = (n - k)(j + 1) + 1$ for $j\leq\nu$. For $j\leq\nu$, the $j-$column distances of $\mathcal{C}$ and of $\bar{\mathcal{C}}$ are the same since 
the corresponding $\nu$-th sliding encoder matrices, ${\cal G}_\nu$ and $\bar{{\cal G}}_\nu$, are equal.

Considering the $(n,k,\delta)$ convolutional code $\bar{\mathcal{C}}$, one observes that
\[
\bar{L} = \left\lfloor \frac{\delta}{k} \right\rfloor + \left\lfloor \frac{\delta}{n - k} \right\rfloor = \left\lfloor \frac{\nu k + k}{k} \right\rfloor + \left\lfloor \frac{\delta }{n - k} \right\rfloor = \nu + 1 + \left\lfloor \frac{\delta}{n - k} \right\rfloor
\] and then conclude that $\bar{L} \geq \nu+1.$

For \( j = \nu + 1 \), we have
\begin{align*}
&\operatorname{wt} \left( \begin{bmatrix}
     u_0 & u_1 & \cdots & u_{\nu+1}
\end{bmatrix} \bar{\mathcal{G}}_{\nu+1} \right)\\
&=
\operatorname{wt} \left(
\begin{bmatrix}
u_0 & u_1 & \dots & u_{\nu} & u_{\nu+1}
\end{bmatrix}
\begin{bmatrix}
G_0 & G_1 & \dots & G_{\nu} & G_{\ell} \\
0 & G_0 & \dots & G_{\nu-1} & G_{\nu} \\
\vdots & \vdots & \ddots & \vdots & \vdots \\
0 & 0 & \dots & G_0 & G_1 \\
0 & 0 & \dots & 0 & G_0
\end{bmatrix}
\right)
\end{align*}

\footnotesize{
\[
= \operatorname{wt} \left(
\begin{bmatrix}
u_0 G_0 
& \dots & u_0 G_{\nu} + u_1 G_{\nu-1} + \cdots + u_{\nu} G_0 & u_1 G_\nu  + \cdots + (u_0+u_{\nu+1-\ell})G_{\ell}  + \cdots + u_{\nu+1} G_0
\end{bmatrix}
\right).
\]}

\normalsize

Let $u_0$ be such that \( \operatorname{wt}(u_0 G_0) = n - k + 1 \), \( u_{\nu+1-\ell} = -u_0 \) and $u_i=0$ for $i \in \{0,1,\dots, \nu+1\} \backslash \{0,\nu+1-\ell\}$. Then, since $ u_{\nu+1-\ell} = -u_0$,


\begin{align}
 & \operatorname{wt}\left(
\begin{bmatrix}
u_0 & 0 & \dots & 0 & u_{\nu+1-\ell} & 0 & \dots & 0
\end{bmatrix}
\begin{bmatrix}
G_0 & G_1 & \dots & G_{\nu} & G_{\ell} \\
0 & G_0 & \dots & G_{\nu-1} & G_{\nu} \\
\vdots & \vdots & \ddots & \vdots & \vdots \\
0 & 0 & \dots & G_0 & G_1 \\
0 & 0 & \dots & 0 & G_0
\end{bmatrix}
\right) \nonumber\\
&= \operatorname{wt} \left(
\begin{bmatrix}
u_0 G_0 & u_0 G_1 & \dots & u_0 G_{\nu - \ell} &  & u_0(G_{\nu - \ell+1}-G_0) & \cdots & u_0 (G_\nu-G_{\ell-1}) & 0 \nonumber
\end{bmatrix}
\right) \\
& \leq (n-k+1) + \nu \cdot n= n(\nu + 1) - k + 1.\nonumber
\end{align} This allows us to conclude that $\bar{\mathcal{C}}$ is not MDP, as $$
\operatorname{wt}(u(z) G(z)) \leq n(\nu + 1) - k + 1 < (n - k)(\nu + 2) + 1.
$$
\end{proof}

The theorem above only covers the case where $n > \delta$. The complementary case where $n =\delta $ is less straightforward. Indeed, if $n = \delta$ and $\mathcal{C}$ is an MDP convolutional code, then the Pseudo-MDP convolutional code $\bar{\mathcal{C}}$ may or may not be MDP as the following examples show.

\begin{example}
 Consider the convolutional codes $\cal C$ and $\bar{\cal C}$ and the respective encoders given in Example \ref{ex1}.
    Since $\cal C$ is MDP, $d_3^c({\cal C})=9$, but $d_3^c(\bar{\cal C}) <9$ because, for example,
    \begin{align*}
    {\big(2+ 25z + 14z^2 + 25z^3\big)\bar{G}(z)}_{[0,3]} \\
    = \begin{bmatrix}
        2 + 4z + 21z^3 &
        4 + 2z + 12z^2 &
        6 + 3z^2
    \end{bmatrix}
\end{align*}
which has weight $8$.
\end{example}

\begin{example}\label{ex3} 
 The $(3,1,2)$ convolutional code $\mathcal{C}$
over $\mathbb{F}_{53}$ with encoder
$$
    G(z)=\begin{pmatrix}
        52 & 49 & 30
    \end{pmatrix} +
    \begin{pmatrix}
       16 & 29 & 14
    \end{pmatrix} z+
    \begin{pmatrix}
        22 & 45 & 41
    \end{pmatrix} z^2,
    $$
has column distances $d_0^c = 3$, $d_1^c = 5$, $d_2^c = 7$ and $d_3^c = 9$, meaning that $\mathcal{C}$ is an MDP convolutional code. In addition, the $(3,1,3)$ convolutional code $\bar{\mathcal{C}}$ with encoder
\begin{eqnarray*}\bar G(z) & = & G(z) +  \begin{pmatrix}
        22 & 45 & 41
    \end{pmatrix}z^3\\& = & \begin{pmatrix}
        52 & 49 & 30
    \end{pmatrix} + \begin{pmatrix}
       16 & 29 & 14
    \end{pmatrix} z+
    \begin{pmatrix}
        22 & 45 & 41
    \end{pmatrix} z^2 +  \begin{pmatrix}
        22 & 45 & 41
    \end{pmatrix} z^3,
   \end{eqnarray*}
has column distances $d_0^c = 3$, $d_1^c = 5$, $d_2^c= 7$, $d_3^c = 9$ and $d_4^c = 11$, meaning that $\bar{\mathcal{C}}$ is also an MDP convolutional code.
\end{example}

Note that if an $(n,k,\delta)$ convolutional code $\cal C$, with $k \mid \delta$, is MDP then it is non-catastrophic \cite{AlfLieb20}. This implies that the Pseudo-MDP convolutional of Example \ref{ex3} is also non-catastrophic. The Pseudo-MDP convolutional code of Example \ref{ex1} is not MDP but is still non-catastrophic (this could be seen by checking that the minors of 
$\bar G(z)$ are coprime). 

Moreover, an $(n,1,1)$ convolutional code with encoder
$$
G(z)=\begin{pmatrix}
        1 & 1 & \cdots & 1 
    \end{pmatrix} +
    \begin{pmatrix}
        a_1 & a_2 & \cdots & a_n
    \end{pmatrix}z 
$$
is MDP if and only $a_i \neq a_j$, for $i \neq j$. In this case, if we consider
$$
\bar G(z)=\begin{pmatrix}
        1 & 1 & \cdots & 1 
    \end{pmatrix} +
    \begin{pmatrix}
        a_1 & a_2 & \cdots & a_n
    \end{pmatrix}z + \begin{pmatrix}
        a_1 & a_2 & \cdots & a_n
    \end{pmatrix}z^2 
$$
or
$$
\tilde G(z)=\begin{pmatrix}
        1 & 1 & \cdots & 1 
    \end{pmatrix} +
    \begin{pmatrix}
        a_1 & a_2 & \cdots & a_n
    \end{pmatrix}z + \begin{pmatrix}
        1 & 1 & \cdots & 1 
    \end{pmatrix}z^2
$$
we always obtain non-catastrophic Pseudo-MDP convolutional codes. Note that column multiplication of an encoder by a nonzero constant does change the properties to be MDP or non-catastrophic. Hence Pseudo-MDP convolutional codes obtained from  $(n,1,1)$ MDP convolutional codes are always non-catastrophic.

However if an MDP convolutional code has degree $0$ (i.e., it is an MDS block code) then the corresponding Pseudo-MDP code is always catastrophic. For other code parameters the existence of catastrophic Pseudo-MDP convolutional codes is not known.



So far, we presented constructions for $(n,k,\delta)$ Pseudo-MDP convolutional codes for all $k | \delta$ and $n\geq(\nu+1)k$ where $\nu=\frac{\delta}{k}$. 
In the following, we give conditions and an example where the condition $n\geq(\nu+1)k$ can be weakened.

If $n\geq \frac{\nu+3}{2} k$ and all fullsize minors of

$$  \begin{pmatrix}
        G_{\ell} & 0\\ G_{\nu} & G_{\ell} \\ G_{\nu-1} & G_{\nu} \\ \vdots & G_{\nu-1} \\ G_0 & \vdots\\ 0 & G_0
    \end{pmatrix}$$
are nonzero, then we can use the equation
\begin{align*}
  \begin{pmatrix}
        u_{i} & \cdots & u_{i+\nu +2}
    \end{pmatrix}  \begin{pmatrix}
        G_{\ell} & 0\\ G_{\nu} & G_{\ell} \\ G_{\nu-1} & G_{\nu} \\ \vdots & G_{\nu-1} \\ G_0 & \vdots\\ 0 & G_0
    \end{pmatrix} = (w_{i+\nu+1} \ w_{i+\nu+2} )
\end{align*}
to recover $\begin{pmatrix}
        u_{i} & \cdots & u_{i+\nu +2}
    \end{pmatrix}$ in case that there are not more than $n- \frac{\nu+3}{2} k$ erasures in $(w_{i+\nu+1} \ w_{i+\nu+2} )$. In this way, the bound on $n$ is lowered from $(\nu+1)k$ to $\frac{\nu+3}{2}k$ and we can tolerate $n- \frac{\nu+3}{2} k$ erasures in $(w_{i+\nu+1} \ w_{i+\nu+2} )$ instead of $n- (\nu+1)k$ erasures in each of the vectors $w_{i+\nu+1}$ and $w_{i+\nu+2}$.
Note that we can tolerate more erasures with this new condition in case $2(n-(\nu+1)k)\leq n-\frac{\nu+3}{2}k$ which is equivalent to $n\leq k(\frac{3}{2}\nu+\frac{1}{2})$.

One can check that the encoder $\bar{G}(z)$ of Example \ref{ex3} fulfills the condition that all fullsize minors of  
$$  \begin{pmatrix}
        G_{\ell} & 0\\ G_{\nu} & G_{\ell} \\ G_{\nu-1} & G_{\nu} \\ \vdots & G_{\nu-1} \\ G_0 & \vdots\\ 0 & G_0
    \end{pmatrix}$$ are nonzero.
But in this case, the condition $n\geq \frac{\nu+3}{2}k=2.5k=2.5$ does not give an improvement over the condition $n\geq(\nu+1)k=3k=3$ from Theorem \ref{th1} and also the number of erasures that can be tolerated stays the same (actually it is equal to zero in both cases).

However, we can use Example \ref{ex3} to obtain the following improved construction.
For any $n,k$ with $n\geq 3k$, assume that $G(z)=G_0+G_1z+G_2z^2+G_3z^3$ is an MDP convolutional code.
Via row operations we obtain

$$
 \begin{pmatrix}
        G_{1} & 0\\ G_{3} & G_{1} \\ G_{2} & G_{3} \\ G_1 & G_{2} \\ G_0 & G_1\\ 0 & G_0
    \end{pmatrix}
    \sim 
\begin{pmatrix}
    G_3-G_0 & 0\\
    G_1 & 0\\
       G_2 & G_3-G_0\\
    G_0 & G_1\\
    0 & G_2\\
    0 & G_0
\end{pmatrix}.$$
If all fullsize minors of this matrix
are nonzero, then $\bar{G}(z)=G_0+G_1z+G_2z^2+G_3z^3+G_1z^4$ can correct the burst
if not more than $n-3k$ erasures happen in $(w_{i+\nu}, w_{i+\nu+1})$. Note that in Theorem \ref{th1}, for this case it would be required to have $n\geq 4k$ and for $n\leq 5k$, we can tolerate more erasures with the improved construction. 

\begin{example}\label{ex:4}
Using for $G(z)$ the Pseudo-MDP code of Example \ref{ex3}, which is also MDP, and extending it via $\bar{G}(z)=G_0+G_1z+G_2z^2+G_2z^3+G_1z^4$
we can do further row operations to obtain

$$
\begin{pmatrix}
    G_2-G_0 & 0\\
    G_1 & 0\\
       G_2 & G_2-G_0\\
    G_0 & G_1\\
    0 & G_2\\
    0 & G_0
\end{pmatrix}
\sim \begin{pmatrix}
    G_2 & 0\\
    G_1 & 0\\
       G_0 & 0\\
    0 & G_2\\
    0 & G_1\\
    0 & G_0
\end{pmatrix}
\in\mathbb{F}_{q}^{6\times 6}.$$
The determinant of this matrix is nonzero and hence, we obtain a $(3,1,4)$ Pseudo-MDP code with $n<\delta=(\nu+1)k$, which we could not obtain with the previous theorems.
\end{example}

In the following section, we will generalize these ideas to obtain codes with good column distances which are able to correct certain bursts and allow constructions for higher code rates than the previous constructions for Pseudo-MDP codes.
In the previous example we used a Pseudo-MDP code that was also MDP as basis for the construction. Since it is not easy to find such codes, for the general idea we relax this condition, i.e. in turn for asking for higher code rates, we will relax the conditions on the column distances. Asking fewer column distances to be optimal has the additional advantage that the MDP code we will start with and extend afterwards has smaller code parameters and hence, can be constructed over smaller finite fields.

\section{$x$-Pseudo MDP Convolutional Codes}

In the following we will generalize the definition of Pseudo-MDP convolutional codes relaxing the conditions on the column distances in order to enable constructions over smaller fields and for larger code rates.

\begin{definition}
    Let $\cal C$ be an $(n,k,\delta)$ convolutional code, such that $k | \delta$ and $n \geq \delta-(x-1)k$, and set $\nu :=\frac{\delta}{k}-x$ for some $x\in\mathbb N$. $\cal C$ is said to be an $x$-Pseudo-MDP convolutional code if the following two conditions are fulfilled.
    \begin{enumerate}
        \item  $\cal C$ has optimal $\nu$-th column distance, i.e., $d_{\nu}^c(\mathcal{C})=(n-k)(\nu+1)+1$.
        \item $\cal C$ is able to correct bursts of length $(\nu+x)n$ with burst delay $x+1$ and symbol delay $\nu+1+x$ with $e=n-\delta+(x-1)k$.
    \end{enumerate} 
\end{definition}

Note that the length of the correctable burst in condition (2) is equal to $\frac{\delta}{k}n$, which is the same as for the Pseudo-MDP convolutional codes from Definition \ref{defPseudo}.
However, with increasing $x$, we can tolerate more erasures in each part of the $x+1$ windows after the burst, but in turn we have to tolerate an increased delay. The main advantage of increasing $x$ is, as mentioned before the previous definition, that the field size can be decreased and the code rate can be increased.


In the following, we present two different constructions for $x$-Pseudo MDP convolutional codes. The first one has the advantage that it works for all $\nu\in\mathbb N$. The second one has the advantage that it allows for easier erasure correction, but it works only for $\nu\geq 2$.


\begin{theorem}\label{th2}
For $x\in\mathbb N$, let $G(z)=\sum_{i=0}^{\nu}G_iz ^i$ be a minimal encoder of an MDP $(n,k, \delta-xk)$ convolutional code $\cal C$, where $k | \delta$ and $n\geq\delta-(x-1)k$. This implies $ \nu =\frac{\delta}{k}-x$. Consider $G^x(z)=G(z)+G_{\ell}\cdot\sum_{i=\nu+1}^{\nu+x}z^i$ for $\ell\in\{1,\hdots,\nu\}$.
 This matrix is an encoder of an $(n,k,\delta)$ $x$-Pseudo-MDP convolutional code $\mathcal{C}^x$. 
\end{theorem}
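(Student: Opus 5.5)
I would mirror the proof of Theorem \ref{th1}, with $x+1$ received blocks after the burst playing the role that $w_{i+\nu+1},w_{i+\nu+2}$ played there.

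\textbf{Column distance.} The first $\nu+1$ coefficients of $G^x(z)$ coincide with $G_0,\dots,G_\nu$. Hence the sliding matrices ${\cal G}^x_\nu$ and ${\cal G}_\nu$ are equal, and $d_\nu^c(\mathcal{C}^x)=d_\nu^c(\mathcal{C})=(n-k)(\nu+1)+1$. This uses that $\mathcal{C}$ is MDP and that $L\ge\nu$ for the $(n,k,\delta-xk)$ code.

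\textbf{Degree.} The row degrees of $G^x$ are $\nu+x=\delta/k$. I would check that the leading coefficient matrix $G_\ell$ has full row rank, so that $G^x$ is minimal and $\mathcal{C}^x$ has degree $\delta$. This follows from the MDP property: all fullsize minors of the block column $(G_\nu^\top \cdots G_0^\top)^\top$ are nonzero, so every block $G_j$ has full rank.

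\textbf{Burst recovery.} Suppose $w_i,\dots,w_{i+\nu+x-1}$ are fully erased, a burst of length $(\nu+x)n$. Suppose further that each of $w_{i+\nu+x},\dots,w_{i+\nu+2x}$ has at most $e=n-\delta+(x-1)k=n-(\nu+1)k$ erasures. For $s=0,\dots,x$ write
\[
w_{i+\nu+x+s}=\sum_{j=0}^{\nu}u_{i+\nu+x+s-j}G_j+\sum_{m=\nu+1}^{\nu+x}u_{i+\nu+x+s-m}G_\ell .
\]
Grouping by the $G_j$'s gives
\[
w_{i+\nu+x+s}=\begin{pmatrix}a^{(s)}_\nu&\cdots&a^{(s)}_0\end{pmatrix}\begin{pmatrix}G_\nu\\ \vdots\\ G_0\end{pmatrix},
\]
where $a^{(s)}_j=u_{i+\nu+x+s-j}$ for $j\neq\ell$, and
\[
a^{(s)}_\ell=u_{i+\nu+x+s-\ell}+\sum_{m=s}^{x-1}u_{i+m}.
\]
Each block-column minor is nonzero, and it has $(\nu+1)k$ rows, so up to $n-(\nu+1)k$ erasures can be tolerated. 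Thus from each $w_{i+\nu+x+s}$ we recover all the $a^{(s)}_j$. Here $\ell\ge1$ matters: the plain terms in the $j\neq\ell$ slots do not overlap the burst-index sums in an uncontrolled way.

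\textbf{Unwinding the equations.} The unknowns are $u_i,\dots,u_{i+\nu+2x}$. Every $u_{i+t}$ with $t\ge x$ and $t\ne x+\nu+s-\ell$ appears directly as some $a^{(s)}_j$, $j\ne\ell$. In particular $u_{i+x+\nu+s-\ell}$ is obtained directly from block $s'=s+\ell'$ for suitable $j$. Since $\ell\ge1$, the index $x+\nu+s-\ell$ equals $x+\nu+s'-j$ with $j=\ell+s'-s$. This needs $j\le\nu$ and $j\ne\ell$, i.e.\ some other block $s'\neq s$ in range. Once all plain $u$'s are known, each $a^{(s)}_\ell$ yields the tail sum $\sigma_s=\sum_{m=s}^{x-1}u_{i+m}$. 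Then $u_{i+s}=\sigma_s-\sigma_{s+1}$ for $s=0,\dots,x-1$, with $\sigma_x=0$ as the convention (no sum appears when $s=x$).

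Finally, I would verify the symbol-delay claim: $u_{i+j}$ is obtained using only $w$ up to index $i+\min\{j+\nu+1+x,\,\nu+2x\}$. Counting the burst as $b+1=\nu+x$ blocks and $d=x+1$, the total window ends at $i+\nu+2x$, which matches.

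\textbf{Main obstacle.} The delicate step is showing that $u_{i+x+\nu+s-\ell}$ (the unknown mixed into slot $\ell$) appears unmixed in some other block $s'\in\{0,\dots,x\}$ with $j\neq\ell$, $0\le j\le\nu$. I would first try to choose $s'=s\pm1$, i.e.\ $j=\ell\pm1$. This is possible when $\nu\ge1$ and $\ell\in\{1,\dots,\nu\}$ ($j=\ell-1\ge0$ with $s'=s-1$, or $j=\ell+1$ with $s'=s+1$). The boundary cases $s=0$ and $s=x$ need care, and I would handle them by induction on $s$.
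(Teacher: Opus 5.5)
Your framework matches the paper's: decode each post-burst block $w_{i+\nu+x+s}$ to its coefficient vector $(a^{(s)}_\nu,\dots,a^{(s)}_0)$ using the nonzero fullsize minors of the stacked $G_\nu,\dots,G_0$, then unwind. The unwinding, however, has a genuine gap.

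First, your mixed coefficient is wrong. The tail $G_\ell\sum_{m=\nu+1}^{\nu+x}z^m$ contributes $u_{i+\nu+x+s-m}$ for $m=\nu+1,\dots,\nu+x$, so
\[
a^{(s)}_\ell=u_{i+\nu+x+s-\ell}+\sum_{m=s}^{s+x-1}u_{i+m}.
\]
In particular, block $s=x$ does contain a sum, namely $u_{i+x}+\cdots+u_{i+2x-1}$. Hence $\sigma_x=0$ and $u_{i+s}=\sigma_s-\sigma_{s+1}$ are false. With $\tau_s:=a^{(s)}_\ell-u_{i+\nu+x+s-\ell}$, the correct relation is $u_{i+s}=\tau_s-\tau_{s+1}+u_{i+s+x}$. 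For $\ell<\nu$ this is repairable: $u_{i+s+x}$ sits alone in slot $\nu$ of block $s$. The two subtracted terms sit alone in slot $\ell+1$ of block $s+1$ and in slot $\ell-1$ of block $s$.

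Second, and more seriously, your argument does not cover $\ell=\nu$. There the term mixed into slot $\ell$ of block $0$ is $u_{i+x}$. But $u_{i+x}$ occurs in no slot $j\neq\ell$ of any block $s'\in\{0,\dots,x\}$, since that would need $j=\nu+s'\geq\nu$. Your choice $j=\ell+1$ exceeds $\nu$, and $s'=s-1$ does not exist for $s=0$. No induction on $s$ changes this, because $u_{i+x}$ is never isolated. So ``first recover all plain $u$'s, then peel off tail sums'' fails.

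You need the separate treatment the paper gives for $\ell=\nu$. There $a^{(s)}_\nu=\sum_{m=s}^{s+x}u_{i+m}$, so $u_{i+s}=a^{(s)}_\nu-a^{(s+1)}_\nu+u_{i+s+x+1}$, with $u_{i+s+x+1}$ alone in slot $\nu-1$ of block $s$. Then $u_{i+x}$ is obtained at the end from $a^{(0)}_\nu$ once $u_i,\dots,u_{i+x-1}$ are known.

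Third, your symbol-delay check only confirms the last index $i+\nu+2x$. The definition demands $u_{i+j}$ from blocks $0,\dots,\min\{j+1,x\}$, and in particular $u_i$ from $w_{i+\nu+x}$ and $w_{i+\nu+x+1}$ alone. You must verify this step by step; the corrected formulas above do use only blocks $s$ and $s+1$ for $u_{i+s}$.

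Your minimality and degree check, via full row rank of $G_\ell$, is correct, and is a point the paper leaves implicit.
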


\begin{proof}
Obviously, $$d_j^c(\mathcal{C}^x)=d_j^c(\mathcal{C})=(n-k)(j+1)+1$$ for $j=0,\hdots,\nu$.
Since $\mathcal{C}$ is MDP with $$L=\Big\lfloor\frac{\delta-xk}{k}\ell\Big\rfloor+\Big\lfloor\frac{\delta-xk}{n-k}\Big\rfloor\geq\nu,$$ we know from Theorem \ref{check} that all fullsize minors of $$\begin{pmatrix}
        G_{\nu} \\ G_{\nu-1}\\ \vdots\\ G_0
    \end{pmatrix}$$ are nonzero. This implies that up to $n-(\nu+1)k=n-\delta+(x-1)k$ erasures in each of the vectors $w_{i+\nu+x}$ and $w_{i+\nu+x+1}$ 
    can be recovered using the equations

\begin{align*}
    w_{i+\nu+x} &= 
    \begin{pmatrix}
        u_{i+x} \\ \vdots \\ u_{i+\nu+x-\ell-1} \\  u_i + \cdots + u_{i+x-1}+u_{i+\nu+x-\ell} \\ u_{i+\nu+x-\ell+1} \\ \vdots \\ u_{i+\nu+x}
    \end{pmatrix}^T
    \begin{pmatrix}
        G_{\nu} \\ \vdots \\  G_{\ell}\\ \vdots \\ G_0
    \end{pmatrix} \\
    w_{i+\nu+x+1} &= 
    \begin{pmatrix}
        u_{i+x+1} \\ \vdots \\ u_{i+\nu+x-\ell} \\  u_{i+1} + \cdots + u_{i+x}+u_{i+\nu+x-\ell+1} \\ u_{i+x+\nu-\ell+2} \\ \vdots \\ u_{i+\nu+x+1}
    \end{pmatrix}^T
    \begin{pmatrix}
        G_{\nu} \\ \vdots \\  G_{\ell}\\ \vdots \\ G_0
    \end{pmatrix} 
\end{align*}
to obtain
$$u_{i+x},\hdots,u_{i+\nu+x-\ell-1} ,u_i + \cdots + u_{i+x-1}+u_{i+\nu+x-\ell}, u_{i+\nu+x-\ell+1},\hdots,u_{i+\nu+x}$$
and 
$$u_{i+x+1},\hdots,u_{i+\nu+x-\ell}, u_{i+1} + \cdots + u_{i+x}+u_{i+\nu+x-\ell+1}, u_{i+\nu+x-\ell+2}, \hdots, u_{i+\nu+x+1}.$$
For $\ell=\nu$, we can use this to obtain
$u_i=u_i+\cdots +u_{i+x}-( u_{i+1}+\cdots+u_{i+x+1})+u_{i+x+1}$. If $\ell<\nu$, then
$u_i=u_i+\cdots +u_{i+x-1}+u_{i+\nu+x-\ell}-( u_{i+1}+\cdots+u_{i+x}+u_{i+\nu+x-\ell+1})+u_{i+x}+u_{i+\nu+x-\ell+1}-u_{i+\nu+x-\ell}$ can be recovered as well.
 Also for the recovery of the remaining message vectors, we distinguish the cases $\ell<\nu$ and $\ell=\nu$.

If $\ell<\nu$, after Step 1, i.e. considering $w_{i+\nu+x}$ and $w_{i+\nu+x+1}$, we recovered
$$u_i,u_{i+x},\hdots,u_{i+\nu+x+1}.$$
In Step 2, we consider what we already recovered from $w_{i+\nu+x+1}$ together with

$$ w_{i+\nu+x+2} = 
    \begin{pmatrix}
        u_{i+x+2} \\ \vdots \\ u_{i+\nu+x-\ell+1} \\ u_{i+2} + \cdots + u_{i+x+1}+u_{i+\nu+x-\ell+2} \\ u_{i+x+\nu-\ell+3} \\ \vdots \\ u_{i+\nu+x+2}
    \end{pmatrix}^T
    \begin{pmatrix}
        G_{\nu} \\ \vdots \\  G_{\ell}\\ \vdots \\ G_0
    \end{pmatrix} 
$$
to recover 
$u_{i+1}$ and $u_{i+\nu+x+2}$.
In general, for $j\in\{1,\hdots,x\}$, after step j,
considering 
$w_{i+\nu+x},\hdots,w_{i+\nu+x+j}$, we recovered
$$u_i,\hdots, u_{i+j-1},u_{i+x},\hdots,u_{i+\nu+x+j},$$
meaning that we can recover the whole burst after step $j=x$, i.e. with delay $x+1$.

If $\ell=\nu$, considering
$w_{i+\nu+x}$, one recovers 
$$u_i+\cdots+u_{i+x},u_{i+x+1},\hdots, u_{i+x+\nu}$$
and considering 
$w_{i+\nu+x+1}$, one recovers 
$$u_{i+1}+\cdots+u_{i+x+1},u_{i+x+2},\hdots, u_{i+x+\nu+1},$$
i.e. after Step 1, one knows 
$$u_i, u_{i+1}+\cdots+u_{i+x},u_{i+x+1},\hdots, u_{i+x+\nu+1}.$$
In Step 2, considering $w_{i+\nu+x+2}$, we recover
$$u_{i+2}+\cdots+u_{i+x+2},u_{i+x+3},\hdots, u_{i+x+\nu+2},$$
i.e. after this step, in total we recovered
$$u_i, u_{i+1},u_{i+2}+\cdots+u_{i+x},u_{i+x+1},\hdots, u_{i+x+\nu+2}.$$
In general after Step $j$, considering received vectors up to $w_{i+\nu+x+j}$
 for $j\in\{1,\hdots,x\}$, one recovers
$$u_i, \hdots, u_{i+j-1} ,u_{i+j}+\cdots+u_{i+x},u_{i+x+1},\hdots, u_{i+x+\nu+j},$$
meaning that we can recover the whole burst after step $j=x$, i.e. with delay $x+1$.\\
\end{proof}


\begin{remark} Note that if $w_{i}, \hdots, w_{i+\nu+x-1}$ are completely erased, $u_i$ cannot be recovered using $G(z)$ since $w_j$, for $j \geq i+\nu+x$, does not depend on $u_i$.
\end{remark}

\begin{remark}
If we choose $\ell=0$ in the previous theorem, we can recover\\ $u_{i+x},\hdots,u_{i+x+\nu-1}, u_i+\cdots+u_{i+x-1}+u_{i+x+\nu}$ from $w_{i+\nu+x}$ and $u_{i+x+1},\hdots,u_{i+x+\nu}, u_{i+1}+\cdots+u_{i+x}+u_{i+x+\nu+1}$ from $w_{i+\nu+x+1}$, which only allows us to recover $u_i-u_{i+x+\nu+1}$ but not $u_i$.
\end{remark}


\begin{theorem}
For $x\in\mathbb N$, let $G(z)=\sum_{i=0}^{\nu}G_iz ^i$ be a minimal encoder of an MDP $(n,k, \delta-xk)$ convolutional code $\cal C$, where $k | \delta$ and $n\geq\delta$. This implies $ \nu =\frac{\delta}{k}-x$. Consider  $\bar{G}^x(z)=G(z)+G_{0}\cdot\sum_{i=\nu+1}^{\nu+x-1}z^i + G_{\ell}z^{\nu + x}$, $\ell \in \{1, \dots, \nu-1\}$.
 This matrix is an encoder of an $(n,k,\delta)$ $x$-Pseudo-MDP convolutional code $\bar{\mathcal{C}}^x$. 
\end{theorem}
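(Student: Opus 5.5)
The plan follows the proof of Theorem \ref{th2}. First, for $j\le\nu$ the $j$-th sliding encoder matrices of $\bar{\mathcal{C}}^x$ and $\mathcal{C}$ coincide, since $\bar G^x$ and $G$ agree in the coefficients of $z^0,\dots,z^\nu$. Hence $d_\nu^c(\bar{\mathcal{C}}^x)=d_\nu^c(\mathcal{C})=(n-k)(\nu+1)+1$. For the burst condition, recall that $\mathcal{C}$ is MDP with $L\ge\nu$. By Theorem \ref{check}, every fullsize minor of $\begin{pmatrix}G_\nu\\ \vdots\\ G_0\end{pmatrix}\in\mathbb{F}_q^{(\nu+1)k\times n}$ is nonzero. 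Thus from any received vector of the form $(a_\nu,\dots,a_0)\begin{pmatrix}G_\nu\\ \vdots\\ G_0\end{pmatrix}$ with at most $n-(\nu+1)k=n-\delta+(x-1)k=e$ erasures, all the $a_s$ can be recovered. The hypothesis $n\ge\delta\ge(\nu+1)k$ guarantees that this is meaningful.

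Next, assume $w_i,\dots,w_{i+\nu+x-1}$ are fully erased. For $j=0,\dots,x$, write the received vector $w_{i+\nu+x+j}$ in this grouped form. Since $\bar G^x_s=G_s$ for $s\le\nu$, $\bar G^x_s=G_0$ for $\nu<s<\nu+x$, and $\bar G^x_{\nu+x}=G_\ell$, the grouped coefficients are as follows:
\begin{align*}
a_0^{(j)}&=u_{i+\nu+x+j}+u_{i+j+1}+\cdots+u_{i+j+x-1},\\
a_\ell^{(j)}&=u_{i+\nu+x+j-\ell}+u_{i+j},\\
a_s^{(j)}&=u_{i+\nu+x+j-s}\quad \text{for } s\notin\{0,\ell\}.
\end{align*}
The hypothesis $1\le\ell\le\nu-1$ is used exactly here. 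Because $\ell\neq 0$, the two "extra'' contributions land on different blocks $G_0$ and $G_\ell$. Because $\ell\neq\nu$, the block $G_{\ell+1}$ exists, and it carries a single message vector. So from step $j$ we learn the $a^{(j)}_s$, that is, many individual message vectors plus two combinations.

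The core of the argument is an induction on $j$ with the following claim. After processing $w_{i+\nu+x},\dots,w_{i+\nu+x+j}$, we know $u_{i},\dots,u_{i+j}$ together with all $u_t$ for $i+x\le t\le i+\nu+x+j$, except possibly those still tied into the $G_0$-combination.

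\emph{Base step.} Step $1$ yields $a^{(1)}_{\ell+1}=u_{i+\nu+x-\ell}$. Subtracting this from $a^{(0)}_\ell$ gives $u_i$.

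\emph{Inductive step.} In general $a^{(j+1)}_{\ell+1}=u_{i+\nu+x+j-\ell}$, which peels $u_{i+j}$ out of $a^{(j)}_\ell$. Comparing $a_0^{(j)}$ and $a_0^{(j+1)}$, their difference is $u_{i+\nu+x+j}-u_{i+\nu+x+j+1}+u_{i+j+1}-u_{i+j+x}$. The high-index terms are known as singles from the next step, and $u_{i+j+x}$ is known from the index range $\ge i+x$. This difference therefore also yields $u_{i+j+1}$. The high-index vectors $u_{i+x},\dots,u_{i+\nu+x+j}$ are obtained as singles $a_s^{(j)}$ with $s\notin\{0,\ell\}$, or from $a^{(j')}_\ell$ once the low partner $u_{i+j'}$ is known. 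After step $x$ all of $u_i,\dots,u_{i+x+\nu}$ should be recovered. This gives burst delay $x+1$, and $u_{i+j}$ is obtained no later than from $w_{i+\nu+x+j}$ or $w_{i+\nu+x+j+1}$, within symbol delay $\nu+1+x$.

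I expect the main obstacle to be the index bookkeeping in the edge cases. When $\ell=1$, the single $u_{i+\nu+x+j-\ell+1}$ needed to resolve $a^{(j)}_\ell$ sits inside $a^{(j)}_0$ rather than in a single block. In that case I would first extract it from the difference $a_0^{(j)}-a_0^{(j+1)}$, or from the next step's $G_1$-block, before peeling. When $x=1$, the sum of $G_0$-extensions is empty and the statement reduces to Theorem \ref{th1} with $\ell\ge 1$, which serves as a consistency check. I would treat the last step $j=x$ separately to verify that no unknown remains without needing $w_{i+\nu+2x+1}$. That is where the ordering of the peeling must be checked most carefully.
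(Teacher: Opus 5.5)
Your grouped coefficients $a^{(j)}_s$ are correct. The peeling $u_{i+j}=a^{(j)}_\ell-a^{(j+1)}_{\ell+1}$ is the mechanism the paper uses. However, two parts of your recovery argument fail as written.

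\textbf{The $G_0$-difference step.} In $a_0^{(j)}-a_0^{(j+1)}=u_{i+\nu+x+j}-u_{i+\nu+x+j+1}+u_{i+j+1}-u_{i+j+x}$, the vector $u_{i+\nu+x+j+1}$ is \emph{not} a single at step $j+1$. At that step it sits inside the $G_0$-block $a_0^{(j+1)}$ itself, and it appears as a single only at step $j+2$ or later. So this difference cannot yield $u_{i+j+1}$ after step $j+1$. Your induction hypothesis ``after step $j$ we know $u_i,\dots,u_{i+j}$'' is in fact false: for $\ell=1$ and $x\ge 2$, after steps $0$ and $1$ only $u_{i+1}+u_{i+\nu+x}$ is determined, as the paper itself observes.

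The repair is to drop this strengthening. The symbol-delay condition only asks for $u_{i+j}$ from $w_{i+\nu+x},\dots,w_{i+\nu+x+\min\{j+1,x\}}$. Peeling with the single $a^{(j+1)}_{\ell+1}$ already delivers $u_{i+j}$ at step $j+1\le x$ for every $0\le j\le x-1$, uniformly for all $\ell\in\{1,\dots,\nu-1\}$. Your $\ell=1$ worry is therefore misplaced: resolving $a^{(j)}_\ell$ needs $u_{i+\nu+x+j-\ell}$, and for $\ell=1$ this is the single $a^{(j+1)}_2$.

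\textbf{The top of the range.} The definition requires recovery of $u_{i+j}$ for all $j\le b+d=\nu+2x$. Your conclusion only asserts recovery up to $u_{i+x+\nu}$. Your mechanisms (singles, or $a^{(j')}_\ell$ minus a known partner) cannot reach $u_{i+\nu+2x}$, because it occurs only in $a^{(x)}_0$. Here the $G_0$-combination must be used: $u_{i+\nu+2x}=a^{(x)}_0-(u_{i+x+1}+\cdots+u_{i+2x-1})$.

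For this and for the middle range, you should verify explicitly that every $u_{i+t}$ with $x\le t\le \nu+2x-1$ is obtained by step $x$. Since $\nu\ge 2$, each such $t$ occupies a slot $G_s$ with $1\le s\le\nu$ in at least two of the windows $0,\dots,x$. Only one of these occurrences can be slot $\ell$, so at least one is a single. The two exceptions are the endpoints. For $t=x$ the only occurrence is slot $G_\nu$ at step $0$, which is a single since $\ell\neq\nu$. For $t=\nu+2x-1$ the only occurrence is slot $G_1$ at step $x$; when $\ell=1$ it must be taken as $a^{(x)}_1-u_{i+x}$.

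With these two repairs the proof goes through and matches the paper's step-by-step argument in substance.
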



\begin{proof}
Obviously, $$d_j^c(\bar{\mathcal{C}}^x)=d_j^c(\mathcal{C})=(n-k)(j+1)+1$$ for $j=0,\hdots,\nu$.

For $\bar{G}^x(z)$, we have the equations
\begin{align*}
    w_{i+\nu+x} &=
    \begin{pmatrix}
        u_{i+x} \\ \vdots \\ u_{i+\nu+x-\ell-1} \\ u_i + u_{i+\nu+x- \ell} \\ u_{i+\nu+x-\ell+1} \\ \vdots \\ u_{i+\nu+x-1} \\ u_{i+\nu + x} + u_{i+1} + \cdots + u_{i+x-1}
    \end{pmatrix}^T
    \begin{pmatrix}
        G_{\nu} \\ G_{\nu-1} \\ \vdots \\ G_0
    \end{pmatrix} \\
    w_{i+\nu+x+1} &=
    \begin{pmatrix}
        u_{i+x+1} \\ \vdots \\ u_{i+\nu+x-\ell} \\ u_{i+1} + u_{i+\nu+x- \ell+1} \\ u_{i+\nu+x-\ell+2} \\ \vdots \\ u_{i+\nu+x} \\ u_{i+\nu + x+1} + u_{i+2} + \cdots + u_{i+x}
    \end{pmatrix}^T
    \begin{pmatrix}
        G_{\nu} \\ G_{\nu-1} \\ \vdots \\ G_0
    \end{pmatrix}
\end{align*}
and using that $G(z)$ is the encoder of an MDP convolutional code as in the proof of Theorem \ref{th1} and that $\ell<\nu$, we can recover 
\begin{align*}
& u_{i+x},\hdots, u_{i+\nu+x-\ell}, u_i + u_{i+\nu+x- \ell}, u_{i+1} + u_{i+\nu+x- \ell+1}, u_{i+\nu + x} + u_{i+1} + \cdots + u_{i+x-1},\\ &u_{i+\nu + x+1} + u_{i+2} + \cdots + u_{i+x}, 
u_{i+\nu+x- \ell+1},\hdots, u_{i+\nu+x-1}, u_{i+\nu+x- \ell+2},\hdots, u_{i+\nu+x},  
\end{align*}
which implies that for $\ell>1$, after Step 1, we can recover
\begin{align*}
u_i,u_{i+1},u_{i+2}+\cdots+u_{i+x-1}, u_{i+x},\hdots, u_{i+\nu+x+1}.
\end{align*}
In Step 2, i.e. considering $w_{i+\nu+1}$ and $w_{i+\nu+2}$, we recover
\begin{align*}
u_{i+1},u_{i+2}, u_{i+3}+\cdots+u_{i+x-1}, u_{i+x},\hdots, u_{i+\nu+x+2},
\end{align*}
i.e., after Step 2, we know 
\begin{align*}
u_i,u_{i+1},u_{i+2}, u_{i+3}, u_{i+4}+\cdots+u_{i+x-1}, u_{i+x},\hdots, u_{i+\nu+x+2}.
\end{align*}
In general, after Step $j$, we know
\begin{align*}
u_i,u_{i+1},u_{i+2}, u_{i+2j-1}, u_{i+2j}+\cdots+u_{i+x-1}, u_{i+x},\hdots, u_{i+\nu+x+j}.
\end{align*}
i.e., for $x\geq2$, we recover the whole burst as soon as $2j\geq x-1$, meaning we can recover the whole burst with delay $\lceil\frac{x-1}{2}\rceil+1$, which is even smaller than the required delay of $x+1$.

If $\ell=1$, in the first step we only recover
\begin{align*}
u_i,u_{i+1}+u_{i+\nu+x}, u_{i+x},\hdots, u_{i+\nu+x-1}, u_{i+\nu+x+1}, u_{i+2}+\cdots+u_{i+x-1}.  
\end{align*}
In Step 2, we then recover,
\begin{align*}
u_{i+1},u_{i+2}+u_{i+\nu+x+1}, u_{i+x+1},\hdots, u_{i+\nu+x}, u_{i+\nu+x+2}, u_{i+3}+\cdots+u_{i+x},  
\end{align*}
i.e. after Step 2, we know 
\begin{align*}
u_i,u_{i+1}, u_{i+2}, u_{i+3}+\cdots+u_{i+x-1}, u_{i+x},\hdots,u_{i+x+\nu+2}.  
\end{align*}
In general after Step $j\geq 2$, one knows
\begin{align*}
u_i,u_{i+1}, \hdots, u_{i+j}, u_{i+j+1}+\cdots+u_{i+x-1}, u_{i+x},\hdots,u_{i+x+\nu+j}, 
\end{align*}
i.e. we recover the whole burst as soon as $j\geq x-2$, i.e. with delay $x-1$.
\end{proof}

\begin{remark}
If we would choose $\ell=\nu$ in the previous construction, we would not be able to recover $u_i$ in the first step but we could only recover
\begin{align*}
    &u_i + u_{i+x}, \quad u_{i+1} + u_{i+x+1}, \quad u_{i+x+1}, \dots, u_{i+x+\nu-1}, \quad u_{i+x+2}, \dots, u_{i+x+\nu}, \\
    &u_{i+x+\nu}+u_{i+1}+\cdots+u_{i+x-1}, \quad u_{i+x+\nu+1}+u_{i+2}+\cdots+u_{i+x}
\end{align*}
i.e., we obtain (among others)
$$u_i + u_{i+x}, u_{i+\nu+x+1}+u_{i+x}$$ 
which does not allow the recovery of $u_i$ in the first step. 
\end{remark}

\begin{remark} As happens in Theorem \ref{th2}, if $w_{i}, \hdots, w_{i+\nu+x-1}$ are completely erased, $u_i$ cannot be recovered using $G(z)$.
\end{remark}


\section{Conclusion}
In this paper we propose $x$-Pseudo MDP convolutional code. These codes have fewer maximal column distances than MDP convolutional codes but in turn can recover larger bursts of erasures and can be defined over smaller finite fields. We present constructions of these codes starting with an MDP convolutional codes of smaller degree and repeating some of the coefficient matrices of its encoder.



\section*{Acknowledgment}
The first and third author are supported by the Center for Research and Development in Mathematics and Applications (CIDMA) (\url{https://ror.org/05pm2mw36})
under the Portuguese Foundation for Science and Technology (FCT)\\ (\url{https://ror.org/00snfqn58}), Grants 
UID/04106/2025 (\url{https://doi.org/10.54499/UID/04106/2025})
and
UID/PRR/04106/2025 (\url{https://doi.org/10.54499/UID/PRR/04106/2025}). The second author is supported by the German research foundation, project number 513811367.


\begin{thebibliography}{00}

\bibitem{AlfLieb20}
G. N. Alfarano, J. Lieb, On the left primeness of some polynomial matrices with applications to convolutional codes, {\it Journal of Algebra and Its Applications} {\bf 20(11)} (2021) 2150207.

\bibitem{rscc}
G. N. Alfarano, D. Napp, A. Neri and V. Requena, Weighted Reed--Solomon convolutional codes, {\it Linear Multilinear Algebra} (2023).

\bibitem{AL:jMDP}
P. Almeida and J. Lieb, Complete j-MDP Convolutional Codes, {\it IEEE Trans. Inf. Theory} {\bf 66} (2020) 7348--7359.

\bibitem{ANP:MDP}
P. Almeida, D. Napp and R. Pinto, A new class of superregular matrices and MDP convolutional codes, {\it Linear Algebra Appl.} {\bf 439} (2013) 2145--2157.

\bibitem{chen}
Z. Chen, A Lower Bound on the Field Size of Convolutional Codes With a Maximum Distance Profile and an Improved Construction, {\it IEEE Trans. Inf. Theory} {\bf 70} (2024) 4064--4073.

\bibitem{cheng2026new}
Z. Cheng, New constructions of MDP convolutional codes with memory 1 for $k=2,3$, {\it Cryptography and Communications} (2026) 1--14.

\bibitem{itw}
S. Dang, J. Lieb, O. Makkonen, P. Soto, A. Sprintson, A matrix completion approach for the construction of MDP convolutional codes, {\it Proc. 2025 IEEE Information Theory Workshop (ITW)} (2025) 710--715.

\bibitem{GRS:sMDS}
H. Gluesing-Luerssen, J. Rosenthal and R. Smarandache, Strongly MDS convolutional codes, {\it IEEE Trans. Inf. Theory} {\bf 52} (2006) 584--598.


\bibitem{Johannesson2015}
R. Johannesson, K. Zigangirov, Fundamentals of convolutional coding, 2nd edn. Wiley-IEEE
Press (2015).

\bibitem{kailath1980}
T. Kailath, Linear Systems. Englewood Cliffs, N.J.: Prentice Hall (1980)

\bibitem{L:cMDP}
J. Lieb, Complete MDP convolutional codes, {\it J. Algebra Appl.} {\bf 18} (2019) 1950105.

\bibitem{LPJ:ConvC}
J. Lieb, R. Pinto and J. Rosenthal, Convolutional Codes, in {\it A Concise Encyclopedia of Coding Theory}, W.C. Huffman, J.-L. Kim, and P. Solé (Eds.), Boca Raton, (2021) 197--225.

\bibitem{unitmemorymdp}
G. Luo, X. Cao, M. F. Ezerman and S. Ling, A Construction of Maximum Distance Profile Convolutional Codes With Small Alphabet Sizes, {\it IEEE Trans. Inf. Theory} {\bf 69} (2023) 2983--2990.

\bibitem{MUNOZCASTANEDA}
Á. L. Muñoz Castañeda and F. J. Plaza-Martín, On the existence and construction of maximum distance profile convolutional codes, {\it Finite Fields Appl.} {\bf 75} (2021) 101877.

\bibitem{NS:MDP}
D. Napp and R. Smarandache, Constructing strongly-MDS convolutional codes with maximum distance profile, {\it Adv. Math. Commun.} {\bf 10} (2016) 275--290.

\bibitem{RS:SingB}
J. Rosenthal and R. Smarandache, Maximum distance separable convolutional codes, {\it Appl. Algebra Engrg. Commun. Comput.} {\bf 10} (1999) 15--32.

\bibitem{TRS:Decod}
V. Tomás, J. Rosenthal and R. Smarandache, Decoding of MDP convolutional codes over the erasure channel, {\it Proc. 2009 IEEE Int. Symp. Inf. Theory (ISIT 2009)} (2009) 556--560.

\bibitem{York97}
E.V. York, Algebraic Description and Construction of Error
Correcting Codes: A Linear Systems Point of View. Ph.D. dissertation, University of Notre Dame, 1997.



\end{thebibliography}
\end{document}